\newtheorem{theorem}{Theorem}
\newtheorem{lemma}{Lemma}
\newtheorem{corollary}{Corollary}
\newtheorem{proposition}{Proposition}
\newtheorem{remark}{Remark}
\newcommand{\e}{\mathrm{e}}
\newcommand{\Tb}{T_\mathrm{b}}
\newcommand{\Pe}{P_\mathrm{b}}
\newcommand{\erfc}{\mathrm{erfc}}
\newcommand{\erf}{\mathrm{erf}}
\newcommand{\diff}{\mathrm{d}}
\newcommand{\Pt}{P_\theta}
\newcommand{\atx}{a_\mathrm{tx}}
\newcommand{\arx}{a_\mathrm{rx}}
\newcommand{\Tx}{\mathrm{Tx}}
\newcommand{\Rx}{\mathrm{Rx}}
\newcommand{\TRx}{T_\mathrm{Rx }}
\begin{document}
	
	\title{\huge Diffusive Mobile MC with Absorbing Receivers: Stochastic Analysis and Applications}
	
	      \author{\IEEEauthorblockN{Trang Ngoc Cao, Arman Ahmadzadeh,  Vahid Jamali, Wayan Wicke,\\ Phee Lep Yeoh, Jamie Evans, and Robert Schober
	      		}
	      		\thanks{%
	      			This paper has been presented in part at IEEE ICC 2019.
	      		}%
	}

	\maketitle
	\begin{abstract}

			This paper presents a  stochastic analysis of the time-variant channel impulse response (CIR) of a three dimensional diffusive mobile molecular communication (MC) system where  the transmitter, the absorbing receiver, and the molecules can freely diffuse. In our analysis, we derive the mean,  variance,   probability density function (PDF), and cumulative distribution function (CDF) of the CIR. We also derive the PDF and CDF of the probability  $p$ that a released molecule is absorbed  at the receiver during  a given time period. The obtained analytical results are employed for the design of drug delivery and  MC systems with imperfect channel state information. 
			For the first application, we exploit the mean and  variance of the CIR to   optimize  a controlled-release drug delivery system employing a mobile drug carrier. We   evaluate the performance of the proposed release  design based on the PDF and CDF of the CIR. We demonstrate significant savings in the amount of released drugs compared to a constant-release  scheme and reveal the necessity of accounting for the  drug-carrier's mobility to ensure reliable drug delivery. 
			For the second application, we exploit the PDF of the distance between the mobile transceivers and the CDF of $p$  to optimize three design parameters  of an MC system employing on-off keying modulation and threshold detection. Specifically, we optimize the detection threshold at the receiver, the release profile at the transmitter, and the time duration of a bit frame. We show that  the proposed optimal designs can significantly improve the system performance in terms of the bit error rate and the efficiency of molecule usage.

	\end{abstract}
	\section{Introduction}

	  As appropriate channel models are essential for the analysis and  design of molecular communication (MC) systems, MC channel modeling has been extensively studied in the literature, see \cite{JAW:18:Arxiv} and references therein. For example, the simple diffusive channel model of an unbounded three-dimensional (3D) MC system with  impulsive point release of  information carrying molecules \cite{YHT:14:CL} has been  widely used for system analysis and design, see \cite{FYE:16:CSTO,FRV:15:TNB}, and references therein. Diffusion channel models with drift \cite{KAE:12:N} and chemical reactions \cite{NCS:14:INB} have also been considered. However, most of the previously studied MC channel models assume static communication systems where the transceivers do not move. 
	  
	  Recently, many applications have emerged where the transceivers are mobile, including drug delivery \cite{CMM:17:ST},  mobile ad hoc networks \cite{GAA:12:MC}, and detection of mobile targets \cite{NOK:17:TC}. 
	  Hence, the modeling and design of mobile MC systems have gained considerable attention, e.g., see \cite{GAA:12:MC,NOK:17:TC,CLY:18:NB,LWLY:18:NB,AJS:18:COMT,JAW:18:Arxiv,HAA:17:CL,VHG:18:Arxiv,VJV:18:CISS}, and references therein. In \cite{GAA:12:MC}, a mobile ad hoc nanonetwork was considered where  mobile nanomachines collect  environmental information and deliver it to a mobile central control unit. The mobility of the nanomachines was described by a 3D model but information was only exchanged when  two nanomachines collided. In \cite{NOK:17:TC}, a leader-follower-based model for two-dimensional mobile MC networks for target detection with non-diffusive information molecules was proposed. The authors in  \cite{CLY:18:NB} considered  adaptive detection and inter-symbol interference (ISI) mitigation in mobile MC systems, while \cite{LWLY:18:NB} analyzed the mutual information and maximum achievable rate in such systems. However, the authors of \cite{CLY:18:NB} and \cite{LWLY:18:NB} did not provide a stochastic analysis of the time-variant channel but analyzed the system numerically. In \cite{AJS:18:COMT}, a comprehensive framework for modeling the time-variant channels of  diffusive mobile MC systems with diffusive transceivers was developed. However, all of the works  mentioned above assumed a passive receiver.
	  
	  On the other hand, for many MC applications, a fully absorbing receiver is considered to be a more realistic model compared to a passive receiver as it  captures the  interaction between the receiver and the information molecules, e.g., the conversion of the information molecules to a new type of molecule or the absorption and removal of the information molecules from the environment \cite{FYE:16:CSTO,YHT:14:CL}. Since the molecules are removed from the environment after being absorbed by the receiver, the channel impulse response (CIR) for absorbing receivers is a more complicated function of the distance between the transceivers and the receiver's radius compared to   passive receivers. Therefore, the  stochastic  analysis of  mobile MC systems with absorbing receiver is very challenging. For the fully absorbing receiver in diffusive mobile MC systems, theoretical expressions for the average distribution of the first hitting time, i.e., the mean of the CIR,  were derived for a \emph{one-dimensional (1D)} environment without drift in \cite{HAA:17:CL} and with drift in \cite{VHG:18:Arxiv}. Based on the 1D model in \cite{HAA:17:CL},   the error rate and channel capacity of the system were examined in \cite{VJV:18:CISS}. 
	  However, none of these works provides a statistical analysis of the time-variant  CIR of a 3D diffusive mobile  MC system with absorbing receiver. In this paper, we address this issue and exploit the obtained analytical results for the stochastic parameters  of  the time-variant MC channel for the design of  drug delivery  and MC systems.
	  
	  In  drug delivery systems,  drug molecules are carried to  diseased cell sites by  nanoparticle drug carriers, so that the drug is  delivered to the targeted site without  affecting healthy  cells \cite{CMM:17:ST}. After being injected or extravasated from the cardiovascular system into the tissue surrounding a targeted diseased cell site, the drug carriers may not be anchored at the targeted site but may move, mostly via diffusion \cite{SSR:14:BE,PNJ:99:BPJ, LEELYP:15:CES,WCX:14:PCL}. The diffusion of the drug carriers results in a time-variant absorption rate of the drugs  even if the  drug release rate is constant. 	Furthermore, experimental and theoretical studies have indicated that  the total drug dosage as well as the rate and time period of drug absorption by the receptors of the diseased cells  are critical factors in the healing process \cite{LEELYP:15:CES,KS:14:DD}. 
	  Therefore, to satisfy reduce drug cost, over-dosing, and negative side effects to healthy cells yet satisfy the treatment requirements, it is important to optimize the release profile of drug delivery systems  such that  the total amount of released drugs is minimized while  a desired rate of drug absorption  at the diseased site during a prescribed time period is achieved. To this end, the mobility of the drug carriers and the absorption rate of the drugs have to be accurately taken into account. This can be accomplished by exploiting the MC paradigm where the  drug carriers, diseased cells, and   drug molecules  are modeled as mobile transmitters,   absorbing receivers,  and signaling molecules, respectively~\cite{FYE:16:CSTO}. Release profile designs for drug delivery systems based on an MC framework  were proposed  in \cite{CPA:15:BME,FRV:15:TNB,SMA:17:NCN,SMH:18:NB}. However, in these works, the transceivers were fixed and only the movement of the drug molecules was considered. In this paper, we  exploit the  analytical results obtained for the stochastic parameters  of  the time-variant MC channel with absorbing receiver for the optimization of the  release profile of drug delivery systems with mobile drug carriers.
	
	In diffusive mobile  MC systems,  knowledge of the  CIR is needed for reliable communication design. However,   the CIR  may not always be  available in a diffusive mobile MC system due to the random movements of the transceivers. In particular, the distance between the transceivers  at the time of release, on which the CIR depends, may only be known at the start of a  transmission frame. In other words, the movement of the transceivers  causes the CSI to become outdated, which makes communication system design challenging. In this paper, we consider  a mobile MC system employing on-off keying and threshold detection and  optimize three design parameters to improve the system performance under imperfect CSI. First, we optimize the detection threshold at the receiver for minimization of the maximum bit error rate (BER) in a frame when the number of  molecules available for transmission is uniformly allocated to  each bit of the frame. Second, we  optimize the release profile at the transmitter, i.e.,  the optimal number of  molecules available for the transmission of each bit,  for  minimization of the maximum  BER in a frame given a fixed number of  molecules available for transmission of the entire frame. Third, we  maximize the frame duration under the constraint that the probability that a released molecule is absorbed by the receiver does not fall below a prescribed value. Such a design ensures that molecules are used efficiently as a molecule release occurs only if the released molecule is observed at the receiver with  sufficiently high probability. For the proposed design tasks, the results of the stochastic analysis of the transceivers'  positions and of the  probability that a  molecule is absorbed during  a given time period are exploited.

	In summary, the main contributions of this paper are as follows:
	\begin{itemize}
		\item We provide a  statistical analysis of the time-variant channel of a 3D diffusive mobile MC system employing an absorbing receiver. In particular, we derive the   mean, variance, PDF, and CDF of the corresponding CIR. Moreover, we derive the PDF and CDF of the  probability that a  molecule is absorbed during  a given time period. The stochastic channel analysis is exploited for the design  of drug delivery and MC systems.
		
		\item For drug delivery systems, the release profile is optimized for the minimization of the amount of released drugs while ensuring that the absorption rate at the diseased cells  does not fall below a prescribed threshold for a given period of time. We show that the proposed design requires a significantly lower  amount of released drugs  compared to a design with  constant-release rate.
		\item For MC systems employing on-off
		keying modulation and threshold detection based on imperfect CSI,  we  optimize three design parameters, namely the detection threshold at the receiver, the release profile at the transmitter, and the time duration of a bit frame. Simulation  results show significant performance gains for the proposed designs in terms of BER and the efficiency of molecule usage compared to baseline systems with uniform molecule release and without limitation on time duration of a bit frame, respectively.
		
		\item Our results reveal that the transceivers' mobility  has a significant impact on the system performance and should be taken into account for MC system design.
		
	\end{itemize}
	
	We note that  the derived analytical results   for the time-variant CIR  of mobile MC systems with absorbing receiver are  expected to be  useful not only for the design of the drug delivery and MC systems considered in this paper but also for the design of  detection schemes and the evaluation of the  performance (e.g., the capacity and throughput) of such systems.
	
    This paper expands its  conference
	version \cite{CAJ:19:ICC}. In particular, the  analysis of the  probability that a  molecule is absorbed during  a given time period, the MC system design for imperfect CSI, and the corresponding simulation results are not included in \cite{CAJ:19:ICC}. 
	
	The remainder of this paper is organized as follows. In Section \ref{sec:2}, we introduce the considered diffusive mobile MC system with absorbing receiver and the time-variant channel model. In Section \ref{sec:3}, we provide the proposed statistical analysis of the time-variant channel. In Sections \ref{sec:4} and \ref{sec:5}, we apply the derived results for optimization of drug delivery and  MC systems with imperfect CSI, respectively. Numerical results are presented in Section \ref{sec:6}, and  Section \ref{sec:7} concludes the paper.

	\section{General System and Channel Model}\label{sec:2}

	In this section, we first introduce the model for a general diffusive mobile MC system  with  absorbing receiver. Subsequently, we specialize the model to drug delivery and communication systems with imperfect CSI. Finally, we define the time-variant CIR and the received signal.
	
		\subsection{System Model} \label{sub2:1}
		
		\begin{figure}[t!] 		
			\centering
			\includegraphics[width=0.55\textwidth, trim=0 0 0 600,clip=true]{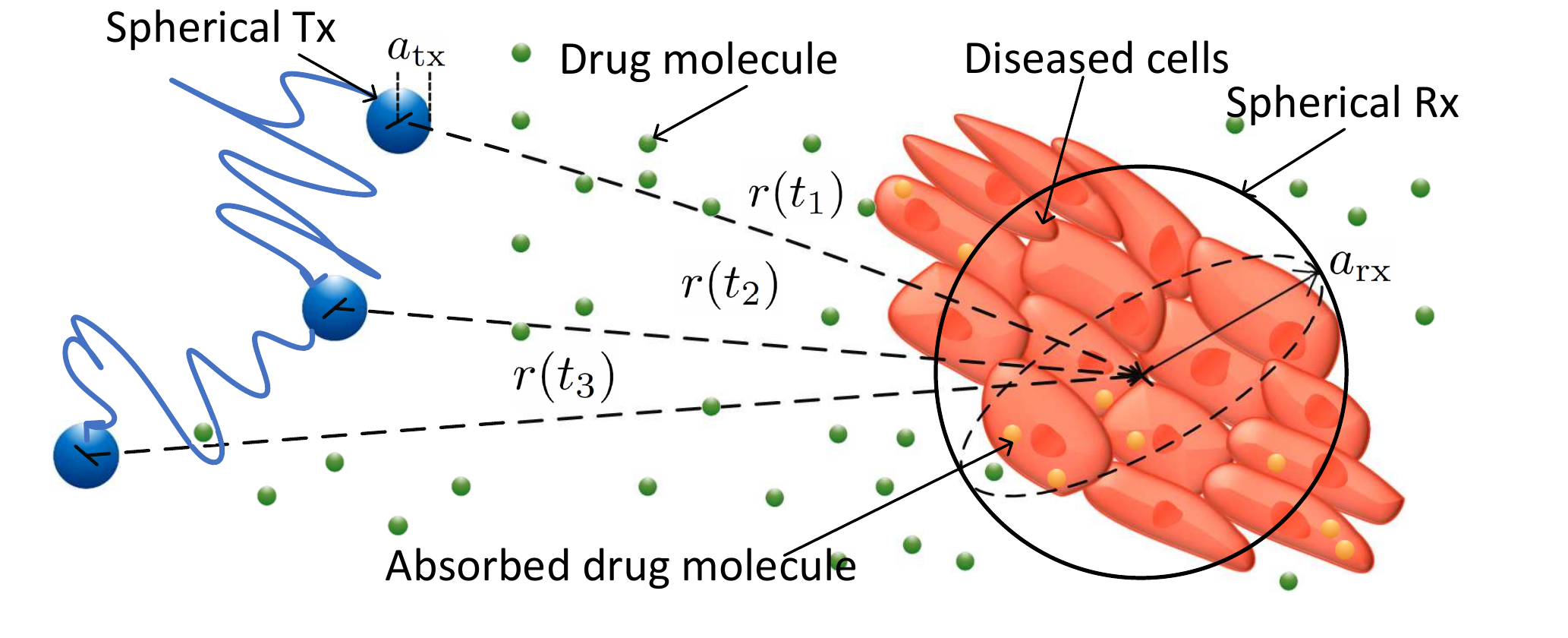}
			\caption{
				System model for  drug delivery.  The drug carrier and the diseased cells of a tumor are modeled as  diffusive spherical transmitter ($\Tx$) and  spherical absorbing receiver ($\Rx$), respectively.  The drug molecules are absorbed by  $\Rx$, when they hit its surface. Different distances between $\Tx$ and $\Rx$ over time are due to $\Tx$'s diffusion.
			}
			\label{fig:2}
		\end{figure}
		
			We consider a   linear diffusive mobile MC system in an unbounded 3D environment with constant temperature and viscosity. The system comprises  one mobile spherical  transparent transmitter, denoted by $\Tx$, with radius $\atx$, one mobile spherical  absorbing receiver, denoted by $\Rx$, with radius $\arx$, and the signaling molecules of type $\mathrm{X}$.

			The movements of  $\Tx$, $\Rx$, and  $\mathrm{X}$ molecules  are assumed to be mutually independent  and  follow  Brownian motion with diffusion coefficients $D_\Tx$, $D_\Rx$, and $D_\mathrm{X}$, respectively. This assumption, which was  also made in \cite{AJS:18:COMT} and \cite{HAA:17:CL}, is motivated by the fact that the mobility of small objects is governed by Brownian motion.
						
			We assume that  $\Tx$ releases molecules at its center  instantaneously and discretely during the considered period of time denoted by $T$. Let $t_i$ and $\Tb$ denote the time instant  of the $i$-th release and the duration of the interval between the $i$-th  and  the $(i+1)$-th release, respectively. We have $t_i=(i-1)\Tb$ and $i\in \{1,\dots,I\}$, where $I$ is the total number of releases during $T$. 			
			We denote the time-varying distance between the centers of  $\Tx$ and $\Rx$ at time $t$ by $r(t)$.  			
			Furthermore, let $\alpha_i$ and $A=\sum_{i=1}^I \alpha_i$  denote the number of  molecules released at time $t_i$ and the total number of molecules released during $T$, respectively. For concreteness, we specialize the considered general model to two application scenarios.
		
		\subsubsection{Drug Delivery Systems}
		
		A drug delivery system comprises  a drug carrier releasing drug molecules and diseased cells absorbing them. We model the drug carrier and diseased cells as   $\Tx$ and $\Rx$ of the general MC system, respectively, see Fig.~\ref{fig:2}. The drug carriers in  drug delivery systems are typically  nanoparticles, such as  spherical polymers or polymer chains, having a size not smaller than $\SI{100}{\nano\metre}$  \cite{PNJ:99:BPJ}. Moreover,  drug carriers  are designed to carry drug molecules and interaction with the drug or the receiver is not intended.  Hence, the drug carriers can be modeled as mobile spherical transparent transmitters, $\Tx$.
		When the drug molecules hit the tumor, they are absorbed by receptors on the surface of the diseased cells  \cite{LEELYP:15:CES,KS:14:DD}. For convenience, we model the tumor as a spherical absorbing receiver, $\Rx$. In reality, the colony of cancer cells may potentially  have a different geometry, of course. However, as an abstract approximation, we model the cancer cells as one effective spherical receiver with radius $\arx$ and with a surface area equivalent to the total surface area of the tumor (see Fig.~\ref{fig:2}). Hence, the absorption on  the actual and the modeled surfaces is expected to be comparable \cite{SSR:14:BE}.	
		
		 In a drug delivery system, the drug carriers can be directly injected or extravasated from the blood into the interstitial tissue near the diseased cells, where they start to move. We assume that the injection position can be estimated and thus $r(t=0)$ is known. The movement of the drug carrying nanoparticles in the tissue is caused by diffusion and convection mechanisms but diffusion is expected to be  dominant in  most cases \cite{SSR:14:BE,PNJ:99:BPJ, LEELYP:15:CES,WCX:14:PCL}. At the tumor site, the drug carrier releases drug molecules of type $\mathrm{X}$, which also diffuse in the tissue \cite{LEELYP:15:CES}. Hence,  we can adopt Brownian motion to model the diffusion of   $\Tx$ and    $\mathrm{X}$ molecules with diffusion coefficients $D_\Tx$, and $D_\mathrm{X}$, respectively \cite{CMM:17:ST}. We consider a rooted tumor and thus $D_\Rx=0$, which is a special case of the considered general system model.
		
	We assume the instantaneous and discrete release of drugs.  After releasing for a period, the drug carrier may be removed by  blood circulation or run out of drugs. Thus, for drug delivery systems, $T$, $\Tb$, $t_i$, and $I$ denote the release period of the drug, the duration  of the interval between two releases, the release instants of the drug molecules,   and the number of releases, respectively. A continuous release can  be approximated by letting $\Tb \rightarrow 0$, i.e., $I \rightarrow \infty$. Moreover, $A$ and $\alpha_i$ denote the total number of drug molecules released during $T$ and the number of drug molecules  released at time $t_i$, respectively.

	\subsubsection{Molecular Communication System}
	
	 For the considered MC system, we  assume  Brownian motion of the transceivers and signaling molecules. We assume multi-frame communication between mobile $\Tx$ and $\Rx$ with instantaneous molecule release for each bit transmission. Hence, $T$,  $\Tb$, $t_i$, and  $I$ denote the duration of a bit frame, the duration of one bit interval,  the beginning of the $i$-th bit  interval, and the number of bits in a frame, respectively. For an arbitrary bit frame, let  $b_i$, $i\in \{1,\dots,I\}$, denote the $i$-th bit in the bit frame. We assume that symbols $0$ and $1$ are	transmitted independently and with equal probability. Thus, the probability of transmitting $\tilde{b}_i$ is $\Pr(\tilde{b}_i)=1/2$, where $\Pr(\cdot)$ denotes  probability and $\tilde{b}_i\in \left\{0,1\right\}$ is a realization of $b_i$. 	 We assume that  on-off keying modulation is employed. At time $t_i$, $\Tx$ releases $\alpha_i$  molecules to transmit bit 1 and no molecules for bit 0.  Then, $A=\sum_{i=1}^I \alpha_i$  is  the total number of molecules available for transmission in a given bit frame.

	 \subsection{Time-variant CIR and Received Signal} \label{sub2:2}
	 
	 Considering again the general system model, we now model the channel  between $\Tx$ and $\Rx$ as well as the received signals at $\Rx$ for drug delivery and MC systems, respectively.

		\subsubsection{Time-variant CIR}
		Let $h(t,\tau)$ denote the hitting rate, i.e., the absorption rate of a given molecule, at time $\tau$ after its release at time $t$ at the center of $\Tx$. Then, for an infinitesimally small observation window $\Delta \tau$, i.e., $\Delta \tau \rightarrow 0$, we can interpret $h(t,\tau)\Delta \tau$ as the probability of absorption of a molecule  by  $\Rx$ between times $\tau$ and $\tau+\Delta \tau$ after its release at time $t$.  The hitting rate $h(t,\tau)$ is also referred to as the CIR since it completely characterizes the time-variant channel, which is assumed to be linear.

		For a given distance between $\Tx$ and $\Rx$, $r(t)$, the CIR  $h(t,\tau)$ of a diffusive mobile MC system at time $\tau$ is given by \cite{YHT:14:CL,JAW:18:Arxiv}
		\begin{align} \label{eq:6}
			h(t,\tau)=\frac{\arx}{\sqrt{4\pi D_1 \tau^3}}\left(1-\frac{\arx}{r(t)}\right)\exp\left(-\frac{\left(r(t)-\arx\right)^2}{4 D_1 \tau}\right), \hspace{1cm} \tau>0,
		\end{align}
		where $h(t,\tau)=0$, for $\tau\leq 0$. Here, $D_1$ is  the effective diffusion coefficient capturing the
		relative motion of the signaling molecules and  $\Rx$, i.e., $D_1=D_X+D_\Rx$,
		see \cite[Eq.~(8)]{AJN:17:CL}. In the considered MC system, due to the motion of the transceivers, the distance $r(t)$ is a random variable, and thus, the CIR  $h(t,\tau)$ is time-variant and should be modeled as a stochastic process \cite{AJS:18:COMT}.

		\subsubsection{Received Signal for Drug Delivery System}

		In drug delivery, the absorption rate ultimately determines the  therapeutic impact of   the drug \cite{LEELYP:15:CES,KS:14:DD}. Thus, we formally define the absorption rate as the desired received signal, and make achieving a desired absorption rate  the objective for  system design. 
		Recall  that   $h(t,\tau)\Delta \tau$, $\Delta \tau\rightarrow 0$,  is the probability of absorption of a molecule  by  $\Rx$ between times $\tau$ and $\tau+\Delta \tau$ after the release at time $t$. If  $\alpha_i$ molecules are released at  $\Tx$ at time $t_i$, the expected number of molecules absorbed at  $\Rx$  between times $t$ and $t+\Delta t$, for $\Delta t \rightarrow 0$, due to this release is  $\alpha_i h( t_i,t-t_i) \Delta t$.  During the period $[0,t]$,   the total number of released drug molecules is   $A_t=\sum_i \alpha_i, \forall i | t_i < t,$  and the expected number of drug molecules absorbed between times $t$ and $t+\Delta t$, for $\Delta t \rightarrow 0$ is given by  $s(t)=\sum_i\alpha_i h( t_i,t-t_i) \Delta t, \forall i | t_i < t$. Let $g(t)$ denote the absorption rate of drug molecules $\mathrm{X}$    at  $\Rx$  at time $t$, i.e., $g(t) =s(t)/\Delta t$, $\Delta t \rightarrow 0$.  Then, we have
		\begin{align}\label{eq:5a}
		g\left(t\right)=\underset{\forall i | t_i < t}{\sum}\alpha_i h\left(t_i,t-t_i\right).
		\end{align}
		As mentioned before, the absorption rate $g(t)$, i.e., the received signal,   of the tumor cells directly affects the healing efficacy of the drug. Hence, we will design the drug delivery system such that $g(t)$ does not fall below a prescribed value. Since $g(t)$ is a function of $h(t_i,t-t_i)$, it is random due to the diffusion of $\Tx$. Therefore, the design of the drug delivery system has to take into account the statistical properties of $g(t)$, which can be obtained from the results of the statistical analysis of $h(t_i,t-t_i)$.	
		
		\subsubsection{Received Signal for MC System}
		
		For the MC system design,  the received signal, denoted by $q_i$, is defined as the number of $\mathrm{X}$ molecules  absorbed at $\Rx$ during  bit interval  $\Tb$  after the transmission of the $i$-th bit   at $t_i$ by $\Tx$ as the received signal, denoted by $q_i$. We detect the transmitted information based on the received signal, $q_i$.  It has been shown in \cite{NCS:14:INB} that $q_i$ follows a Binomial distribution that can be accurately approximated by a Gaussian distribution when $\alpha_i$ is large, which we  assume here.
		We focus on the effect of the transceivers' movements on the MC system performance and  design the optimal release profile of $\Tx$ to account for these movements. We assume  the bit interval to be sufficiently long such that most of the molecules have been captured by or have moved far away from  $\Rx$ before  the following bit is transmitted, i.e.,  ISI is negligible. We note that enzymes \cite{NCS:14:INB} and reactive information molecules, such as acid/base molecules \cite{FG:16:SPAWC, JFSG:18:ICC}, may be used to speed up the molecule removal process and to increase the accuracy of the ISI-free assumption.  Moreover, we  model  external noise sources in the environment as  Gaussian background noise with mean and variance equal to $\eta$ \cite{JAW:18:Arxiv}. Thus, we have 
		\begin{align} \label{eq:51}
		q_i\sim 	\mathcal{N}\left(\mu_{i,\tilde{b}_i},\sigma_{i,\tilde{b}_i}^2\right) \text{ for } b_i=\tilde{b}_i,
		\end{align}
		where $\mu_{i,0}=\sigma_{i,0}^2=\eta$, 
		$\mu_{i,1}=\alpha_i p(t_i,\Tb)+\eta$,  $\sigma_{i,1}^2=\alpha_i p(t_i, \Tb)(1-p(t_i, \Tb))+\eta$. Here, $\mathcal{N}\left(\mu,\sigma^2\right)$ denotes a Gaussian distribution with mean $\mu$ and variance $\sigma^2$.
		 $p(t, \Tb)$ denotes  the  probability that a signaling molecule is absorbed during  bit interval $\Tb$  after its release at time $t$ at the center of $\Tx$.  For  a given distance $r(t)$,   $p(t, \Tb)$ is given by \cite{YHT:14:CL}
		\begin{align} \label{eq:33}
		p(t, \Tb)=\int_0^{\Tb} h(t,\tau)\diff \tau =\frac{\arx}{r(t)}\erfc\left(\frac{r(t)-\arx}{2\sqrt{D_1\Tb}}\right),
		\end{align}
		where $\erfc(\cdot)$ is the complementary error function.
		Since $r(t)$ is a random variable and $p(t, \Tb)$ is a function of $r(t)$, $p(t, \Tb)$ and any function of $p(t, \Tb)$, e.g., the received signal $q_i$, are random processes. Moreover, $p(t, \Tb)$ is also a function of $h(t,\tau)$. 
		 Hence, for MC system design, we have to take into account the statistical properties of $p(t, \Tb)$, which can be obtained based on the proposed statistical analysis of $r(t)$ and $h(t,\tau)$.

		In summary,  the design of both drug delivery  and MC systems depends on   the statistical properties  of the CIR, $h(t,\tau)$, and $r(t)$ including their  means, variances, PDFs, and CDFs, which will be analyzed in the next section.

	\section{Stochastic Channel Analysis} \label{sec:3}

		In this section, we first analyze the distribution of the distance between the transceivers, $r(t)$, and then use it to derive the statistics of the time-variant CIR, $h(t,\tau)$, and $p(t,\Tb)$ as a function $h(t,\tau)$.	In particular,  we develop analytical expressions for the mean, variance, PDF, and CDF of $h(t,\tau)$ and the PDF and CDF of $p(t,\Tb)$.
		
		\subsection{Distribution of the $\Tx$-$\Rx$ Distance for a Diffusive System} \label{sub2:3}
		
		In the 3D  space, $r(t)$ is given by   $r(t)=\sqrt{\sum_{d\in\{x,y,z\}}(r_{d,\Rx}(t)-r_{d,\Tx}(t))^2}$, where $r_{d,\Tx}(t)$ and $r_{d,\Rx}(t)$, $ d\in\{x,y,z\}$, are the Cartesian coordinates representing the positions of  $\Tx$ and  $\Rx$ at time $t$, respectively. 
		Let us assume, without loss of generality, that the diffusion of  $\Tx$ and $\Rx$ starts at $t=0$. Then, given the Brownian motion model for the mobility of  $\Tx$ and $\Rx$, we have  $r_{d,\Tx}(t) \sim \mathcal{N}\left(r_{d,\Tx}(t=0), 2 D_\Tx t\right)$ and $r_{d,\Rx}(t) \sim \mathcal{N}\left(r_{d,\Rx}(t=0), 2 D_\Rx t\right)$, where   we assume that $r_{d,\Tx}(t=0)$ and $r_{d,\Rx}(t=0)$ are known.  Let us define $r_{d}(t)=r_{d,\Rx }(t)-r_{d,\Tx}(t)$. Then, we have 
		$r_{d}(t) \sim \mathcal{N}\left(r_{d}(t=0), 2 D_2 t\right)$, where $D_2=D_\Tx+D_\Rx$ is the effective diffusion coefficient
		capturing the relative motion of  $\Tx$ and  $\Rx$, see
		\cite[Eq.~(10)]{AJN:17:CL}.
		Given the Gaussian distribution of $r_{d}(t)$, we know that  \cite{MBB:58:QAM}
		\begin{align}\label{eq:4}
		\gamma=\frac{r(t)}{\sqrt{2 D_2 t}}=\sqrt{\frac{\sum_{d\in\{x,y,z\}}r_{d}^2(t)}{2 D_2 t}},
		\end{align}
		follows a noncentral chi-distribution, i.e., $\gamma \sim \mathcal{X}_k(\lambda)$, with $k=3$ degrees of freedom   and parameter $\lambda=\sqrt{\frac{\sum_{d\in\{x,y,z\}}r_{d}^2(t=0)}{2 D_2 t}} =\frac{r_0}{\sqrt{2 D_2 t}}$, where $r_0$ denotes $r(t=0)$.  The statistical properties  of random variable $r(t)$  are provided in the following lemma.
		\begin{lemma} \label{lem:0}
			The mean, variance, PDF, and CDF of random variable $r(t)$, which represents the distance between the centers of the diffusive mobile $\Tx$ and $\Rx$, are given by, respectively,
			\begin{align}
			\label{eq:5c}
			\mathrm{E}\left\{r(t)\right\}&=\sqrt{\frac{4 D_2 t}{\pi}} \exp\left(-\frac{r_0^2}{4 D_2 t}\right) +\left(r_0 + \frac{2 D_2 t}{r_0}\right) 
			\erf\left(\frac{r_0}{\sqrt{4D_2 t}}\right),\\
			\label{eq:5d}
			\mathrm{Var}\left\{r(t)\right\}&=r_0^2+6 D_2 t-\mathrm{E}^2\left\{r(t)\right\},\\
			\label{eq:5}
			f_{r(t)}(r)&=\frac{r}{r_0  \sqrt{\pi D_2 t}}\exp\left(-\frac{r^2+r_0^2}{4 D_2 t}\right)\sinh\left(\frac{r_0 r}{2 D_2 t}\right),\\
			\label{eq:29}
			\text{and} \hspace{0.5 cm}	F_{r(t)}(r)&=1-\mathbf{Q}_{\frac{3}{2}}\left(\lambda,\frac{r}{\sqrt{2D_\Tx t}}\right).
			\end{align}
			where $\erf(\cdot)$ is the error function, $\mathbf{Q}_{M}\left(a,b\right)$ is the  Marcum Q-function \cite{Rob:69:Bel}, $\mathrm{E}\left\{\cdot\right\}$ denotes statistical expectation, $\mathrm{Var}\{\cdot\}$ denotes variance,  and $f_{\left\{\cdot \right\}} (\cdot)$ and $F_{\left\{\cdot \right\}} (\cdot)$ denote the PDF and CDF of the random variable in the subscript, respectively.
		\end{lemma}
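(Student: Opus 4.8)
The plan is to treat $r(t)$ as a scaled noncentral chi variate and then read off its PDF and CDF from known expressions, while obtaining the moments by a combination of an elementary second-moment argument and one direct integral. The excerpt already establishes that the per-coordinate differences $r_d(t)=r_{d,\Rx}(t)-r_{d,\Tx}(t)$ are independent and $\mathcal{N}(r_d(t=0),2D_2 t)$-distributed, so that $\gamma=r(t)/\sqrt{2D_2 t}\sim\mathcal{X}_3(\lambda)$ with $\lambda=r_0/\sqrt{2D_2 t}$. All four quantities will follow from this single observation together with the change of variables $r=\sqrt{2D_2 t}\,\gamma$, whose Jacobian is the constant $1/\sqrt{2D_2 t}$.

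For the PDF \eqref{eq:5} I would start from the standard noncentral chi density with $k=3$ degrees of freedom, which contains the modified Bessel function $I_{k/2-1}=I_{1/2}$. Using the elementary closed form $I_{1/2}(z)=\sqrt{2/(\pi z)}\sinh z$ collapses the Bessel function into a hyperbolic sine, and the remaining powers of $\gamma$ and $\lambda$ simplify to a single factor $\gamma/\lambda$. Applying the Jacobian and substituting $\gamma=r/\sqrt{2D_2 t}$ and $\lambda=r_0/\sqrt{2D_2 t}$, the prefactor reduces to $r/(r_0\sqrt{\pi D_2 t})$ and the exponent to $-(r^2+r_0^2)/(4D_2 t)$, reproducing \eqref{eq:5}. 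The CDF \eqref{eq:29} is then immediate: the noncentral chi CDF equals $1-\mathbf{Q}_{k/2}(\lambda,\cdot)$ in terms of the Marcum Q-function, so with $k/2=3/2$ and the same substitution one obtains $F_{r(t)}(r)=1-\mathbf{Q}_{3/2}\big(\lambda,\,r/\sqrt{2D_2 t}\big)$.

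The variance \eqref{eq:5d} is best handled before the mean, because the second moment is trivial: by independence and linearity, $\mathrm{E}\{r^2(t)\}=\sum_{d}\mathrm{E}\{r_d^2(t)\}=\sum_d\big(2D_2 t+r_d^2(t=0)\big)=6D_2 t+r_0^2$, so $\mathrm{Var}\{r(t)\}=r_0^2+6D_2 t-\mathrm{E}^2\{r(t)\}$ once the mean is in hand. It remains to compute $\mathrm{E}\{r(t)\}=\int_0^\infty r\,f_{r(t)}(r)\,\diff r$ directly. Writing $\sinh\big(r_0 r/(2D_2 t)\big)$ as a difference of two exponentials splits the integral into two Gaussian-type integrals whose exponents I would complete to squares of the form $(r\mp r_0)^2/(4D_2 t)$; the prefactor $\exp(-r_0^2/(4D_2 t))$ then cancels. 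Substituting $u=r\mp r_0$ turns these into moments of a centered Gaussian over the half-lines $[-r_0,\infty)$ and $[r_0,\infty)$; the even (quadratic and constant) parts of the difference collapse to integrals over $[0,r_0]$, while the odd (linear-in-$u$) parts evaluate to boundary terms. Evaluating the elementary pieces $\int_0^{r_0}e^{-u^2/(4D_2 t)}\diff u$ and $\int_0^{r_0}u^2 e^{-u^2/(4D_2 t)}\diff u$ (the latter by one integration by parts) produces exactly the $\erf(r_0/\sqrt{4D_2 t})$ and $\exp(-r_0^2/(4D_2 t))$ terms of \eqref{eq:5c}.

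I expect the mean integral to be the only delicate step: the PDF, CDF, and second moment are direct consequences of textbook facts about the noncentral chi distribution and of independence, whereas the mean requires careful bookkeeping of the two shifted Gaussians, in particular matching signs and limits so that the odd contributions combine into the $r_0+2D_2 t/r_0$ coefficient and the even contributions supply the $\erf$ factor with no leftover boundary terms. An alternative that avoids the integral altogether is to invoke the closed-form mean of the noncentral chi distribution in terms of the generalized Laguerre polynomial $L_{1/2}^{(1/2)}(-\lambda^2/2)$ and simplify for $k=3$; I would keep this as a cross-check rather than the primary route, since the Laguerre evaluation is no shorter than the direct integration.
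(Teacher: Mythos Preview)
Your proposal is correct. For the PDF and CDF you take the same route as the paper: scale the noncentral chi density by the Jacobian, use $I_{1/2}(z)=\sqrt{2/(\pi z)}\sinh z$, and read off the CDF as $1-\mathbf{Q}_{3/2}(\lambda,\cdot)$. Where you differ is in the moments. For $\mathrm{E}\{r^2(t)\}$ the paper invokes a general series formula for noncentral chi moments and resums it via the exponential Maclaurin series, whereas you simply add the second moments of the three independent Gaussian coordinates; your argument is shorter and avoids any special-function machinery. For $\mathrm{E}\{r(t)\}$ the paper again starts from the moment series, rewrites the Gamma ratios, and closes the two resulting series using tabulated identities (Prudnikov--Brychkov--Marichev), while you integrate $\int_0^\infty r\,f_{r(t)}(r)\,\diff r$ directly by splitting $\sinh$ and completing squares. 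Both routes are valid: the paper's approach leans on known references and is essentially lookup once the series is identified, whereas yours is self-contained but, as you anticipate, requires careful tracking of the shifted limits and the odd/even decomposition to recover the $r_0+2D_2 t/r_0$ coefficient. Your Laguerre cross-check is a third legitimate route and is closer in spirit to the paper's series method.
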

		\begin{proof}
			Please refer to Appendix~\ref{app:0}.
		\end{proof}	
		
		\begin{remark} \label{re:1a}
			From \eqref{eq:5c} and \eqref{eq:5d}, we can observe that when $t\rightarrow \infty$, we have $\exp\left(-\frac{r_0^2}{4 D_2 t}\right) \rightarrow 1$ and $\erf\left(\frac{r_0}{\sqrt{4D_2 t}}\right)$ $ \rightarrow~0$  and, as a result, $\mathrm{E}\left\{r(t)\right\}\rightarrow \infty$.
			Intuitively,  because of diffusion, the transceivers eventually move far away from each other on average.
		\end{remark}
		
		\begin{remark}\label{re:1}
			We note that \eqref{eq:5} was derived under the assumption that  $\Tx$ can diffuse in the entire 3D environment. However, in reality,  $\Tx$ cannot move inside  $\Rx$, i.e., it does not interact with  $\Rx$, and thus will be reflected when it hits  $\Rx$'s boundary. Hence, the actual $f_{r(t)}(r)$, derived in \cite{AJN:17:CL},  differs from  \eqref{eq:5}, e.g., $f_{r(t)}(r)=0$ for $r<\atx+\arx$. However,    for very small $r$, i.e., $r\approx 0$, \eqref{eq:5}  approaches zero. Hence,   \eqref{eq:5} is a valid approximation for the actual $f_{r(t)}(r)$. The validity of this approximation is evaluated in Section~\ref{sec:6} via simulations, where, in our particle-based simulation,
			$\Tx$ is reflected upon collision with $\Rx$ \cite{DNE:15:MBMS}.
		\end{remark}	
			
		\subsection{Statistical Moments of Time-variant CIR } \label{sub3:1}
			
		In this subsection, we derive the statistical moments of the time-variant CIR, i.e., mean $m(t,\tau)$ and variance $\sigma^2(t,\tau)$. In particular, the mean of the time-variant CIR, $m(t,\tau)$, can be written as
		\begin{align} \label{eq:7}
			m(t,\tau)=\int_0^\infty h(t,\tau)\left|_{r(t)=r}\right.f_{r(t)}(r) \diff r.
		\end{align}
		A closed-form expression for \eqref{eq:7} is provided in the following theorem.
			
		\begin{theorem}
			The mean of the impulse response of a time-variant  channel  with diffusive molecules released by a diffusive transparent transmitter and captured by a diffusive absorbing receiver is given by
			\begin{align} \label{eq:8}
				m(t,\tau)&= \frac{\arx}{4\sqrt{\pi\left(D_1 \tau+D_2 t\right)} r_0 \tau}\exp\left(-\frac{\arx^2}{4D_1 \tau}-\frac{r_0^2}{4D_2 t}\right)\left[-\e^{\frac{v(t,\tau)^2}{4u(t,\tau)}}\left(\frac{v(t,\tau)}{2u(t,\tau)}+\arx\right)\right.\\\nonumber
				&\times\left.\erfc\left(\frac{v(t,\tau)}{2\sqrt{u(t,\tau)}}\right)
				+\e^{\frac{w(t,\tau)^2}{4u(t,\tau)}}\left(\frac{w(t,\tau)}{2u(t,\tau)}+\arx\right)\erfc\left(\frac{w(t,\tau)}{2\sqrt{u(t,\tau)}}\right)\right],
			\end{align}
			where   $u(t,\tau)$, $v(t,\tau)$, and $w(t,\tau)$ are defined, for compactness, as follows
			\begin{align} \label{eq:9}
				u(t,\tau)=\frac{1}{4D_1 \tau}+\frac{1}{4D_2 t},\hspace{5mm} v(t,\tau)=-\frac{\arx}{2D_1 \tau}-\frac{r_0}{2D_2 t} ,\hspace{5mm} w(t,\tau)=-\frac{\arx}{2D_1 \tau}+\frac{r_0}{2D_2 t}.
			\end{align}
		\end{theorem}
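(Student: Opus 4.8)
The plan is to evaluate the integral in \eqref{eq:7} directly by inserting the explicit CIR from \eqref{eq:6} (with $r(t)=r$) and the distance PDF from \eqref{eq:5}. First I would combine the two expressions: the factor $\left(1-\frac{\arx}{r}\right)$ in $h$ multiplies the factor $r$ coming from $f_{r(t)}$, so that $\left(1-\frac{\arx}{r}\right)r = r-\arx$ and the $1/r$ singularity disappears. After pulling all $r$-independent constants out front, the integrand is $(r-\arx)$ times the Gaussian-type kernel $\exp\!\left(-\frac{(r-\arx)^2}{4D_1\tau}-\frac{r^2+r_0^2}{4D_2 t}\right)$ multiplied by $\sinh\!\left(\frac{r_0 r}{2D_2 t}\right)$.

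Next I would expand $(r-\arx)^2$ and collect the coefficient of $r^2$, which is exactly $u(t,\tau)$ in \eqref{eq:9}; the $r$-independent exponential constant $\exp\!\left(-\frac{\arx^2}{4D_1\tau}-\frac{r_0^2}{4D_2 t}\right)$ is factored out. Writing $\sinh(x)=\tfrac12(\e^{x}-\e^{-x})$ splits the integral into two pieces whose net linear coefficients of $r$ in the exponent are precisely $-v(t,\tau)$ and $-w(t,\tau)$. Thus the problem reduces to evaluating, for $c\in\{v,w\}$, the single building block
\[
J(c)=\int_0^\infty (r-\arx)\,\exp\!\left(-u r^2 - c r\right)\diff r .
\]

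To compute $J(c)$ I would complete the square, $-ur^2-cr=-u\big(r+\tfrac{c}{2u}\big)^2+\tfrac{c^2}{4u}$, and substitute $s=r+\tfrac{c}{2u}$. The $\arx$-term and the linear part of the $r$-weight both produce the Gaussian tail $\int_{c/2u}^\infty \e^{-us^2}\diff s=\tfrac12\sqrt{\pi/u}\,\erfc\!\big(\tfrac{c}{2\sqrt{u}}\big)$ (valid for either sign of $c$), while the piece $\int_{c/2u}^\infty s\,\e^{-us^2}\diff s$ integrates elementarily to $\tfrac{1}{2u}\e^{-c^2/4u}$, whose exponential cancels the completed-square prefactor and leaves a clean $\tfrac{1}{2u}$. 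Collecting terms gives
\[
J(c)=\frac{1}{2u}-\frac12\sqrt{\frac{\pi}{u}}\,\e^{c^2/4u}\,\erfc\!\left(\frac{c}{2\sqrt{u}}\right)\left(\frac{c}{2u}+\arx\right).
\]
Forming the combination $\tfrac12\big(J(v)-J(w)\big)$ dictated by the $\sinh$, the constant $\tfrac{1}{2u}$ terms cancel, and what remains is $\tfrac14\sqrt{\pi/u}$ times exactly the bracketed $[\,\cdots]$ expression in \eqref{eq:8}.

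Finally I would assemble the overall constant. Multiplying the collected prefactor $\frac{\arx}{2\pi r_0\sqrt{D_1 D_2\tau^3 t}}$ by $\tfrac14\sqrt{\pi/u}$ and substituting $u=\frac{D_1\tau+D_2 t}{4D_1 D_2\tau t}$, the powers of $D_1,D_2,\tau,t$ collapse to the claimed $\frac{\arx}{4\sqrt{\pi(D_1\tau+D_2 t)}\,r_0\tau}$. The main obstacle is purely bookkeeping: tracking the linear-in-$r$ coefficients through the $\sinh$ split so that they land on $-v$ and $-w$, and then verifying that the prefactor simplifies exactly, which rests on the identity $\sqrt{\pi/u}=2\sqrt{\pi D_1 D_2\tau t/(D_1\tau+D_2 t)}$ together with the cancellation $\sqrt{D_1 D_2\tau t}/\sqrt{D_1 D_2\tau^3 t}=1/\tau$. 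No convergence subtleties arise, since the Gaussian kernel dominates as $r\to\infty$.
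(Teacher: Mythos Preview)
Your proposal is correct and follows essentially the same approach as the paper: substitute \eqref{eq:6} and \eqref{eq:5} into \eqref{eq:7}, reduce to Gaussian-type integrals of the form $\int_0^\infty r^n\exp(-ur^2-cr)\,\diff r$, and evaluate. The only difference is that the paper cites the tabulated integrals \cite[Eqs.~(2.3.15.4), (2.3.15.7)]{PBM:86:Book} for this last step, whereas you carry it out explicitly by completing the square; your derivation is therefore more self-contained but otherwise identical in structure.
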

			
		\begin{proof}
			Substituting \eqref{eq:6} and \eqref{eq:5}  into \eqref{eq:7} and using the integrals given by \cite[Eq.~(2.3.15.4) and Eq.~(2.3.15.7)]{PBM:86:Book}, we obtain the expression for $m(t,\tau)$  in \eqref{eq:8}.	
		\end{proof} 
			
		\begin{remark}\label{re:2}
			$m(t,\tau)$ is a function of time $t$. Hence, $h(t,\tau)$ is a non-stationary stochastic process. In general, at large $t$,  $m(t,\tau)$ decreases when $t$ increases and eventually approaches zero when $t~\rightarrow~\infty$. This means that  as $t$ increases, the molecules released by $\Tx$, on average, have a decreasing chance of being absorbed by $\Rx$  since the transceivers move away from each other as mentioned in Remark~\ref{re:1a}.   
		\end{remark}

		In order to obtain the 
		variance of $h(t,\tau)$,
		\begin{align} \label{eq:22}
		\sigma^2(t,\tau)=\phi(t,\tau)-m^2(t,\tau),
		\end{align}
		we first need to find an expression for the second moment $\phi(t,\tau)$, defined as $\phi(t,\tau)=\mathrm{E}\left\{h^2(t,\tau)\right\} $. 
		The following corollary provides an analytical expression for  $\phi(t,\tau)$.
		\begin{corollary}
			$\phi(t,\tau)$ is given by
			\begin{align} \label{eq:18}
			\phi(t,\tau)=&c(t,\tau)\int_0^\infty \left(\exp\left(-\hat{u}(t,\tau)r_1^2-\hat{v}(t,\tau)r_1\right)-\exp\left(-\hat{u}(t,\tau)r_1^2-\hat{w}(t,\tau)r_1\right)\right)\\\nonumber
			&\times\left(r_1-2\arx+\frac{\arx^2}{r_1}\right)\diff r_1, 
			\end{align}
			where 
			\begin{align} \label{eq:19}
			&c(t,\tau)=\frac{\arx^2\e^{-\frac{\arx^2}{2D_1\tau}-\frac{r_0^2}{4D_2 t}}}{8D_1 \pi \tau^3 r_0\sqrt{\pi D_2 t}},\hspace{1cm}& \hat{u}(t,\tau)=\frac{1}{2D_1 \tau}+\frac{1}{4D_2 t},\\\nonumber
			&\hat{v}(t,\tau)=-\frac{\arx}{D_1\tau}-\frac{r_0}{2D_2 t},  &\hat{w}(t,\tau)=-\frac{\arx}{D_1\tau}+\frac{r_0}{2D_2 t}.
			\end{align}	
		\end{corollary}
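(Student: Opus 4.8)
The plan is to evaluate the second moment directly from its definition $\phi(t,\tau)=\mathrm{E}\{h^2(t,\tau)\}=\int_0^\infty h^2(t,\tau)|_{r(t)=r}\, f_{r(t)}(r)\,\diff r$, substituting the CIR \eqref{eq:6} and the distance PDF \eqref{eq:5}. The whole argument is careful bookkeeping of exponential arguments; no integral identity beyond what has already appeared is needed to reach \eqref{eq:18}, since the corollary states the result in integral form rather than in closed form.

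First I would square \eqref{eq:6}. Squaring the prefactor $\arx/\sqrt{4\pi D_1\tau^3}$ gives $\arx^2/(4\pi D_1\tau^3)$, squaring the Gaussian doubles its argument so that the denominator $4D_1\tau$ becomes $2D_1\tau$, and squaring $(1-\arx/r)$ gives $1-2\arx/r+\arx^2/r^2$. Multiplying this last factor by the leading $r$ of the PDF \eqref{eq:5} collapses it to the rational expression $r-2\arx+\arx^2/r$, which is exactly the bracket $(r_1-2\arx+\arx^2/r_1)$ in \eqref{eq:18} after I rename $r\to r_1$.

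Second I would split the hyperbolic sine in \eqref{eq:5} via $\sinh(x)=\tfrac12(\e^x-\e^{-x})$, which introduces the overall factor $1/2$ and produces two exponential branches with linear terms $\pm r_0 r/(2D_2 t)$; the minus branch carries the overall minus sign seen between the two exponentials in \eqref{eq:18}. I then isolate, in each branch, the $r$-independent part of the combined exponent, namely $-\arx^2/(2D_1\tau)-r_0^2/(4D_2 t)$, and pull it together with all the constant prefactors to form $c(t,\tau)$ in \eqref{eq:19}. Verifying that $\tfrac{\arx^2}{4\pi D_1\tau^3}\cdot\tfrac{1}{r_0\sqrt{\pi D_2 t}}\cdot\tfrac12$ equals the stated coefficient is the one numerical sanity check I would run, and it does.

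What remains is to collect, inside each branch, the coefficients of $r^2$ and of $r$. The quadratic coefficient is common to both branches, $-(1/(2D_1\tau)+1/(4D_2 t))=-\hat u(t,\tau)$, while the two linear coefficients become $-(\arx/(D_1\tau)+r_0/(2D_2 t))=-\hat v(t,\tau)$ and $-(\arx/(D_1\tau)-r_0/(2D_2 t))=-\hat w(t,\tau)$, matching \eqref{eq:19}. Assembling these pieces yields \eqref{eq:18}. The computation is entirely elementary; the only place where a sign or a factor of two can slip in is the interplay between the doubled Gaussian argument produced by squaring $h$ and the $\pm$ linear terms from the sinh expansion, so the exponent-collection step is the one I would double-check most carefully.
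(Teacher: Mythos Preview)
Your proposal is correct and follows exactly the paper's own approach: start from $\phi(t,\tau)=\int_0^\infty h^2(t,\tau)|_{r(t)=r_1}\, f_{r(t)}(r_1)\,\diff r_1$, substitute \eqref{eq:6} and \eqref{eq:5}, and simplify. You have simply spelled out the bookkeeping that the paper summarizes in one line; note only a small sign slip in your exposition of the linear coefficients (the coefficient of $r$ in the plus branch is $+(\arx/(D_1\tau)+r_0/(2D_2 t))=-\hat v$, not $-(\arx/(D_1\tau)+r_0/(2D_2 t))$), which does not affect the final result.
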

		\begin{proof}
			From the definition, we have
			\begin{align} \label{eq:20}
			\phi(t,\tau)&=\mathrm{E}\left\{h^2(t,\tau)\right\}=\int_{0}^\infty h^2(t,\tau)\left|_{r(t)=r_1} \right. f_{r(t)}\left(r_1\right)\diff r_1 .
			\end{align}
			Substituting \eqref{eq:6} and \eqref{eq:5} into \eqref{eq:20} and simplifying the expression, we obtain \eqref{eq:18}.
		\end{proof}
		
				\begin{remark}
					The expression in 
					\eqref{eq:18} comprises  integrals of the form  $\int_0^\infty \exp\left(a x^2+b x\right)/x\ \diff x$, where $a$ and $b$ are constants. Such integrals cannot be obtained in closed form. However, the integrals can be evaluated numerically in a straightforward manner.
				\end{remark}

		\subsection{Distribution Functions of the Time-variant CIR } \label{sub3:2}
			
		In this subsection, we derive analytical expressions for the PDF and CDF of $h\left(t,\tau\right)$. The PDF of $h\left(t,\tau\right)$ is  given in the following theorem.
			
		\begin{theorem} \label{theo:3}
			The PDF of the impulse response  of a time-variant  channel  with diffusive molecules released by a diffusive transparent transmitter and captured by a diffusive  absorbing receiver  is given by
			\begin{align} \label{eq:25}
				\begin{cases}
				f_{h\left(t,\tau\right)}(h)=\frac{f_{r(t)}(r_1(h))}{\hat{h}'(r_1(h),\tau)}-\frac{f_{r(t)}(r_2(h))}{\hat{h}'(r_2(h),\tau)}, &\text{for } 0 \leq h < h^\star,\\
				f_{h\left(t,\tau\right)}(h)\rightarrow \infty, &\text{for } h=h^\star,\\ 
				f_{h\left(t,\tau\right)}(h)=0, &\text{otherwise,}
				\end{cases}
			\end{align}
			where $\hat{h}\left(r,\tau\right)$ denotes $h\left(t,\tau\right)$, given by \eqref{eq:6}, as a function of $r(t)$ and $\tau$, $f_{r(t)}(r)$ is given by \eqref{eq:5},   $r_1(h)$ and $r_2(h)$, $r_1(h)< r_2(h)$, are the solutions of the equation $\hat{h}\left(r,\tau\right)=h$, $h^\star$ is the maximum value of $\hat{h}\left(r,\tau\right)$ for all values of $r(t)$,
			and $\hat{h}'(r,\tau) $ is given by
			\begin{align} \label{eq:27}
				\hat{h}'(r,\tau)=&\frac{\arx}{\sqrt{4\pi D_1 \tau^3}}\exp\left(-\frac{\left(r-\arx\right)^2}{4 D_1 \tau}\right) \left(\frac{\arx}{r^2}-\frac{\left(r-\arx\right)}{2D_1 \tau}\left(1-\frac{\arx}{r}\right)\right).
			\end{align}
		\end{theorem}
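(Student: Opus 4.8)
The plan is to read \eqref{eq:6} with $\tau$ held fixed as a deterministic transformation $h=\hat{h}(r,\tau)$ of the single random variable $r=r(t)$, whose density is known from \eqref{eq:5}, and then to apply the change-of-variables formula for the PDF of a transformed random variable. The essential feature that dictates the three-case structure of \eqref{eq:25} is that this transformation is \emph{not} monotonic in $r$, so the naive single-branch Jacobian does not apply and the preimages must be counted explicitly.

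First I would pin down the qualitative shape of $\hat{h}(\cdot,\tau)$ on the physically meaningful domain $r\in(\arx,\infty)$, where $h\ge 0$. At the left endpoint the factor $(1-\arx/r)$ forces $\hat{h}(\arx,\tau)=0$, and as $r\to\infty$ the Gaussian factor forces $\hat{h}(r,\tau)\to 0$. Differentiating \eqref{eq:6} with respect to $r$ by the product rule yields exactly the expression \eqref{eq:27}, and setting $\hat{h}'(r,\tau)=0$ reduces, after clearing the common positive factors, to the condition $r(r-\arx)^2=2D_1\tau\arx$. Since the left-hand side is strictly increasing from $0$ to $\infty$ on $(\arx,\infty)$, this has a unique root $r^\star$. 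Hence $\hat{h}(\cdot,\tau)$ is unimodal: strictly increasing on $(\arx,r^\star)$ and strictly decreasing on $(r^\star,\infty)$, with maximum value $h^\star=\hat{h}(r^\star,\tau)$.

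With the two monotone branches identified, for each $h\in[0,h^\star)$ the equation $\hat{h}(r,\tau)=h$ has precisely two roots $r_1(h)<r^\star<r_2(h)$, and the density is the sum of the single-branch contributions $f_{r(t)}(r_k(h))/|\hat{h}'(r_k(h),\tau)|$. On the increasing branch $\hat{h}'(r_1,\tau)>0$, so the absolute value may be dropped; on the decreasing branch $\hat{h}'(r_2,\tau)<0$, so $|\hat{h}'(r_2,\tau)|=-\hat{h}'(r_2,\tau)$. Combining the two contributions produces the signed difference of terms displayed in the first case of \eqref{eq:25}, and substituting $f_{r(t)}$ from \eqref{eq:5} finishes it.

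Finally I would dispose of the boundary and exterior cases. At $h=h^\star$ the two roots coalesce at $r^\star$, where $\hat{h}'(r^\star,\tau)=0$; a local expansion $\hat{h}(r)\approx h^\star+\tfrac12\hat{h}''(r^\star)(r-r^\star)^2$ shows $|\hat{h}'|\sim\sqrt{h^\star-h}$, so both denominators vanish and $f_{h(t,\tau)}(h)\to\infty$ as an integrable inverse-square-root singularity. For $h>h^\star$ there is no root, and for $h<0$ no $r$ in the physical domain maps to $h$ (consistent with Remark~\ref{re:1}, since $f_{r(t)}(r)$ is negligible for $r<\arx$), so the density is zero in both cases. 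The main obstacle is precisely this non-monotonicity: correctly establishing unimodality, counting the two preimages, and tracking the sign of $\hat{h}'$ on each branch is what yields the difference of two terms in \eqref{eq:25} rather than a single Jacobian term.
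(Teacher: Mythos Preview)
Your proposal is correct and follows essentially the same route as the paper's proof: establish that $\hat{h}(\cdot,\tau)$ is unimodal with a single critical point $r^\star$, count the two preimages for $h<h^\star$, and apply the standard change-of-variables formula for the PDF of a function of a random variable, tracking the sign of $\hat{h}'$ on each branch. The only notable variation is that the paper argues uniqueness of the critical point via the discriminant of the associated cubic $r(r-\arx)^2=2D_1\tau\,\arx$, whereas you restrict to the physical domain $r>\arx$ and observe that the left-hand side is strictly increasing there; your argument is slightly more direct and avoids computing the discriminant.
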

			
		\begin{proof}
			Please refer to Appendix~\ref{app:1}.
			
		\end{proof}
		
		As stated in the proof of Theorem~\ref{theo:3}, there are two different values of $r(t)$,  $r_1$ and $r_2$, leading to the same value of $\hat{h}(r,\tau)$, i.e.,  $h(t,\tau)$, when $ 0 \leq h < h^\star$. Hence, the PDF of $h(t,\tau)$ is a function of the PDFs of these two  values of $r(t)$. However, when $h(t,\tau)$ reaches its maximum, $f_{h\left(t,\tau\right)}(h)$ approaches infinity and does not depend on $f_{r(t)}(r(h))$ since the probability of $h=h^\star$, i.e., $\Pr(h=h^\star)={f_{h\left(t,\tau\right)}(h)}\ \diff h$, is finite and $\diff h$ approaches $0$ at $h=h^\star$. 
			
		The CDF of $h(t,\tau)$ is given in the following corollary.
		\begin{corollary} \label{theo:4}
			The CDF of the  impulse response  of a time-variant  channel  with diffusive molecules released by a diffusive transparent transmitter and captured by a diffusive absorbing receiver  is given by
			\begin{align} \label{eq:28}
			\begin{cases}
				F_{h\left(t,\tau\right)}(h)=F_{r(t)}(r_1(h))+1-F_{r(t)}(r_2(h)), &\text{for } 0 \leq h \leq h^\star,\\
				F_{h\left(t,\tau\right)}(h)=0, &\text{for } h<0,\\
				F_{h\left(t,\tau\right)}(h)=1, &\text{for } h>h^\star,
				\end{cases}
			\end{align}
			where $F_{r(t)}(r)$ is  given by \eqref{eq:29}.

		\end{corollary}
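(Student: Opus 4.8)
The plan is to obtain $F_{h(t,\tau)}(h)=\int_{-\infty}^{h}f_{h(t,\tau)}(h')\,\diff h'$ directly from the PDF of Theorem~\ref{theo:3}, exploiting the geometry of the map $r\mapsto\hat h(r,\tau)$ that already underlies \eqref{eq:25}. For fixed $\tau$, $\hat h(r,\tau)$ defined by \eqref{eq:6} vanishes at $r=\arx$, rises to the single maximum $h^\star$ at some $r^\star$, and then decays monotonically to zero as $r\to\infty$; this unimodal shape is exactly what produces the two branches $r_1(h)\in[\arx,r^\star]$ and $r_2(h)\in[r^\star,\infty)$ in Theorem~\ref{theo:3}, with $\hat h'(r_1,\tau)>0$ on the increasing branch and $\hat h'(r_2,\tau)<0$ on the decreasing branch (so that both terms of \eqref{eq:25} are positive). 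The two extreme regimes of \eqref{eq:28} are then immediate from the support of $h(t,\tau)$: for $h>h^\star$ every realization obeys $\hat h(r(t),\tau)\le h^\star<h$, giving $F_{h(t,\tau)}(h)=1$; and for $h<0$ the support $[0,h^\star]$ established in Theorem~\ref{theo:3} gives $F_{h(t,\tau)}(h)=0$. It remains to treat $0\le h\le h^\star$.

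For the nontrivial range I would integrate \eqref{eq:25} branch by branch. The first term contributes $\int_0^{h}\frac{f_{r(t)}(r_1(h'))}{\hat h'(r_1(h'),\tau)}\,\diff h'$; substituting $r=r_1(h')$ along the increasing branch, where $\diff h'=\hat h'(r,\tau)\,\diff r$, the Jacobian cancels and the integral collapses to $\int_{\arx}^{r_1(h)}f_{r(t)}(r)\,\diff r=F_{r(t)}(r_1(h))-F_{r(t)}(\arx)$. Likewise, the second term with $r=r_2(h')$ on the decreasing branch (where $r_2$ runs from $\infty$ down to $r_2(h)$ as $h'$ increases from $0$ to $h$, and $\hat h'<0$) yields $\int_{r_2(h)}^{\infty}f_{r(t)}(r)\,\diff r=1-F_{r(t)}(r_2(h))$. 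Adding the two contributions gives $F_{h(t,\tau)}(h)=F_{r(t)}(r_1(h))+1-F_{r(t)}(r_2(h))-F_{r(t)}(\arx)$, with $F_{r(t)}(\cdot)$ taken from \eqref{eq:29}, which is \eqref{eq:28} up to the residual boundary term $F_{r(t)}(\arx)$.

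The main obstacle is therefore this spurious term $F_{r(t)}(\arx)$: the clean form in \eqref{eq:28} holds only after it is dropped. I would justify this exactly as in Remark~\ref{re:1}, using that the transmitter cannot penetrate the receiver, so $r(t)\ge\atx+\arx>\arx$, whence under the approximate density \eqref{eq:5} the probability mass below $\arx$ is negligible and $F_{r(t)}(\arx)\approx 0$. As a consistency check I would also derive \eqref{eq:28} at the level of events: since $\hat h(\cdot,\tau)$ is unimodal and nonnegative on $[\arx,\infty)$, the event $\{h(t,\tau)\le h\}$ for $0\le h\le h^\star$ is the union of the disjoint events $\{\arx\le r(t)\le r_1(h)\}$ and $\{r(t)\ge r_2(h)\}$, which reproduces $[F_{r(t)}(r_1(h))-F_{r(t)}(\arx)]+[1-F_{r(t)}(r_2(h))]$ and again yields \eqref{eq:28} under $F_{r(t)}(\arx)\approx 0$. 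The only remaining care points are bookkeeping—tracking the sign of $\hat h'$ on each branch and the orientation of the substitution limits—together with the unimodality of $\hat h(r,\tau)$, which is inherited from the proof of Theorem~\ref{theo:3}.
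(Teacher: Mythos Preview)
Your approach is essentially the paper's: integrate the PDF in \eqref{eq:25} branch by branch via the substitutions $r=r_1(h')$ and $r=r_2(h')$, so the Jacobians cancel and one obtains $F_{r(t)}(r_1(h))+1-F_{r(t)}(r_2(h))$. You are in fact more careful than the paper about the lower endpoint on the increasing branch---the paper silently writes $\int_0^{r_1(h)}f_{r(t)}(\check r_1)\,\diff\check r_1$, whereas your $\int_{\arx}^{r_1(h)}$ together with the appeal to Remark~\ref{re:1} to discard $F_{r(t)}(\arx)$ is the more honest bookkeeping.
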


		\begin{proof}
			From the definition of the CDF and \eqref{eq:25}, we have
			\begin{align} \label{eq:30}
				F_{h\left(t,\tau\right)}(h)=&\int_0^{h} f_{h\left(t,\tau\right)}(\check{h})\diff \check{h}
				=\int_0^{h} \frac{f_{r(t)}(\check{r}_1(\check{h}))}{ \partial \hat{h}(\check{r}_1,\tau)/\partial \check{r}_1}-\frac{f_{r(t)}(\check{r}_2(\check{h}))}{ \partial \hat{h}(\check{r}_2,\tau)/\partial \check{r}_2}\diff\check{h}\\ \nonumber
				=&\int_0^{{r}_1(h)} f_{r(t)}(\check{r}_1)\diff\check{r}_1-\int_\infty^{{r}_2(h)}f_{r(t)}(\check{r}_2) \diff\check{r}_2
				=F_{r(t)}(r_1(h))+1-F_{r(t)}(r_2(h)),
			\end{align}
			where $\check{r}_1$ and $\check{r}_2$, $\check{r}_1<\check{r}_2$, are the solutions of the equation $\hat{h}\left(\check{r},\tau\right)=\check{h}$.
			This completes the proof.
		\end{proof}

		Similar to the PDF, the CDF of 	$h\left(t,\tau\right)$ also depends on the CDFs of two  values of $r(t)$, i.e., $r_1(h)$ and $r_2(h)$. 
		
		\subsection{Distribution Functions of $p(t,\Tb)$}
		
		Calculating the  mean of $p(t, \Tb)$ involves an integral of the form $\int_0^\infty \erfc\left(ax\right)\exp(-b^2 x^2+c x)\diff x $, with appropriate constants $a, b, c>0$, for which a closed-form expression is not known. However, based on the results in  Subsections~\ref{sub2:3} and \ref{sub3:2}, we  obtain the PDF and CDF of $p(t, \Tb)$  in the following corollary.
		\begin{corollary} \label{cor:4}
			The PDF and CDF of the probability that a diffusive molecule is absorbed by a diffusive absorbing receiver during  an interval $\Tb$  after its release at time $t$   by a diffusive transparent transmitter are, respectively, given by 
			\begin{align}\label{eq:60}
			f_{p(t, \Tb)}(p)=-\frac{f_{r(t)}(\tilde{r}(p))}{p'(\tilde{r})},
			\end{align}		
			\begin{align}\label{eq:61}
			F_{p(t, \Tb)}(p)=1-F_{r(t)}(\tilde{r}(p)),
			\end{align}	
			where $f_{r(t)}(r)$ and $F_{r(t)}(r)$ are  given by  \eqref{eq:5} and \eqref{eq:29}, respectively. Here, $\tilde{r}(p)$ is the solution of the equation $p(t, \Tb)=p$  and $p'(\tilde{r})$ is given by
			\begin{align} \label{eq:62}
			p'(\tilde{r})=&-\frac{\arx}{\tilde{r}^2}\erfc{\left(\frac{\tilde{r}-\arx}{2\sqrt{D_1\Tb}}\right)}-\frac{\arx}{\tilde{r}\sqrt{\pi D_1 \Tb}}\exp\left(-\frac{(\tilde{r}-\arx)^2}{4D_1 \Tb}\right).
			\end{align}
		\end{corollary}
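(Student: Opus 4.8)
The plan is to treat $p(t,\Tb)$ as a deterministic, differentiable transformation of the \emph{single} random variable $r(t)$, whose PDF and CDF are already available from Lemma~\ref{lem:0}, and then invoke the standard change-of-variables formula for a function of one random variable. The key simplification relative to Theorem~\ref{theo:3} is that $p(t,\Tb)$, viewed as a function of $r$ through \eqref{eq:33}, will turn out to be strictly monotone, so the transformation is a bijection and no summation over multiple branches is needed.

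First I would establish monotonicity. Writing $p(r)=\frac{\arx}{r}\erfc\!\left(\frac{r-\arx}{2\sqrt{D_1\Tb}}\right)$ and differentiating via the product rule together with $\frac{\diff}{\diff x}\erfc(x)=-\frac{2}{\sqrt{\pi}}\e^{-x^2}$ and the chain rule yields exactly the expression $p'(\tilde{r})$ stated in \eqref{eq:62}. Since $\arx>0$, $\erfc(\cdot)\ge 0$, and the exponential is strictly positive, both summands in \eqref{eq:62} are negative, so $p'(r)<0$ for every $r>0$. Hence $p$ is strictly decreasing in $r$ and admits a unique inverse $\tilde{r}(p)$ solving $p(t,\Tb)=p$.

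With monotonicity in hand, the PDF follows from the single-variable transformation rule $f_{p(t,\Tb)}(p)=f_{r(t)}(\tilde{r}(p))\left|\diff\tilde{r}/\diff p\right|=f_{r(t)}(\tilde{r}(p))/\left|p'(\tilde{r})\right|$; because $p'(\tilde{r})<0$ we have $\left|p'(\tilde{r})\right|=-p'(\tilde{r})$, which gives \eqref{eq:60}. For the CDF, I would exploit that $p$ decreasing in $r$ makes the event $\{p(t,\Tb)\le p\}$ equivalent to $\{r(t)\ge\tilde{r}(p)\}$, so that $F_{p(t,\Tb)}(p)=\Pr(r(t)\ge\tilde{r}(p))=1-F_{r(t)}(\tilde{r}(p))$, which is \eqref{eq:61}. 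Substituting $f_{r(t)}$ from \eqref{eq:5} and $F_{r(t)}$ from \eqref{eq:29} then completes the argument.

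The main obstacle is the monotonicity step, since without it one would be forced to mimic the two-branch construction of Theorem~\ref{theo:3}. The decisive observation is that the sign of $p'(r)$ in \eqref{eq:62} is unambiguously negative over the whole domain $r>0$: the $\erfc$ factor never changes sign, and both the $1/r$ geometric attenuation and the diffusive exponential tail drive $p$ downward as $r$ grows, so the inverse $\tilde{r}(p)$ is globally well defined and the simpler single-branch formulas apply.
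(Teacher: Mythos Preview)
Your proposal is correct and follows essentially the same approach as the paper: both recognize that $p(t,\Tb)$ is a deterministic function of the single random variable $r(t)$, verify from \eqref{eq:62} that $p'(\tilde r)<0$ so the inverse $\tilde r(p)$ is unique, and then apply the standard change-of-variables formulas for the PDF and CDF of a monotone transformation. Your write-up is somewhat more explicit than the paper's brief sketch, but the underlying argument is identical.
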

		\begin{proof}
			The proof of Corollary~\ref{cor:4} follows the same steps as  the proof of Theorem~\ref{theo:3} and Corollary~\ref{theo:4} and exploits that $p(t, \Tb)$ is a function of $r(t)$ as shown in \eqref{eq:33}. From \eqref{eq:62}, we observe that $p'(\tilde{r})<0$ so   the equation $p(t, \Tb)=p$ has only one solution. Then, we apply the relations for the PDFs and CDFs of  functions of  random variables \cite{BV:04:Book} to obtain \eqref{eq:60} and \eqref{eq:61}.
		\end{proof}

		The mean, variance, PDF, and CDF of $h\left(t,\tau\right)$ and $p(t,\Tb)$ can be exploited to design  efficient and reliable synthetic MC systems. As examples, we consider the design and analysis of drug delivery and MC systems in the following two sections.

	\section{Drug Delivery System Design}\label{sec:4}
	
	In this section, we apply the derived stochastic parameters of the time-variant  CIR for absorbing receivers for the design and  performance evaluation of drug delivery systems.

		\subsection{Controlled-Release Design} \label{sub4:2}
	
		The treatment of many diseases requires the diseased cells to absorb a minimum rate of drugs during a prescribed time period at minimum cost \cite{KS:14:DD}. To design an efficient drug delivery system satisfying this requirement, we minimize  the total number of released drug molecules, $A=\sum_{i=1}^I \alpha_i$,  subject to the constraint that the absorption rate $g(t)$ is  equal to or larger than a target rate, $\theta(t)$, for a period of time, denoted  by $\TRx$. We allow  $\theta(t)$ to be a function of time so that the designed system can satisfy different treatment requirements over time. 
		Since $g(t)$ is random, we cannot always guarantee $g(t)\geq \theta(t)$. Hence,
		we will design the system based on the first and second order moments of $g(t)$ and use the PDF and CDF of $g(t)$ to evaluate the system performance. In particular, we reformulate the constraint such that the mean of $g(t)$ minus a certain deviation  is equal to or above the threshold $\theta(t)$ during $\TRx$, i.e., $\mathrm{E}\left\{g(t)\right\}-\beta \mathrm{V}\left\{g(t)\right\}\geq \theta(t)$, $0\leq t \leq \TRx$, where  $ \mathrm{V}\{\cdot\}$ denotes    standard deviation and $\beta$ is a coefficient determining how much deviation from the mean is taken into account. Based on \eqref{eq:5a}, the constraint can be written as a function of $\alpha_i$ as follows
		\begin{align} \label{eq:35}
			&\mathrm{E}\left\{g(t)\right\}-\beta\mathrm{V}\left\{g(t)\right\}\overset{(a)}{\geq} \sum_{\forall i | t_i<t}\alpha_i\left(\mathrm{E}\left\{h\left(t_i,t-t_i\right)\right\}-\beta\mathrm{V}\left\{h\left(t_i,t-t_i\right)\right\} \right){\geq} \theta(t),
		\end{align}
		for $0\leq t\leq \TRx$. Inequality $(a)$ in \eqref{eq:35} is due to $\mathrm{E}\left\{g(t)\right\}=\mathrm{E}\bigg\{\underset{\forall i | t_i < t}{\sum}\alpha_i h\left(t_i,t-t_i\right)\bigg\}=\sum_{\forall i | t_i<t}\alpha_i$ \ $\times\mathrm{E}\left\{h\left(t_i,t-t_i\right)\right\}$ and Minkowski's inequality \cite{Ste:08:Lec}:
		\begin{align}
		\mathrm{V}\left\{g(t)\right\}&=\left[\mathrm{E}\left\{\left(\underset{\forall i | t_i < t}{\sum}\left(\alpha_i h\left(t_i,t-t_i\right)-\mathrm{E}\left\{\alpha_i h\left(t_i,t-t_i\right)\right\}\right)\right)^2\right\}\right]^{1/2}\\\nonumber
		&\leq\underset{\forall i | t_i < t}{\sum}\alpha_i \left[\mathrm{E}\left\{\left( h\left(t_i,t-t_i\right)-\mathrm{E}\left\{ h\left(t_i,t-t_i\right)\right\}\right)^2\right\}\right]^{1/2}=\sum_{\forall i | t_i<t}\alpha_i\mathrm{V}\left\{h\left(t_i,t-t_i\right)\right\}.
		\end{align}
		
		Note that we may not be able to  find $\alpha_i$ such that \eqref{eq:35}  holds for all values of $\beta$ and $\theta(t)$. For example, when $\beta$ is too large, $\mathrm{E}\left\{g(t)\right\}-\beta\mathrm{V}\left\{g(t)\right\}$ can be negative and hence \eqref{eq:35} cannot be satisfied for $\theta(t)>0$.  However, when $\mathrm{E}\left\{h\left(t_i,t-t_i\right)\right\}>\beta\mathrm{V}\left\{h\left(t_i,t-t_i\right)\right\}$, i.e., either $\beta$  or $\mathrm{V}\left\{h\left(t_i,t-t_i\right)\right\}$ are small, such that $\beta\mathrm{V}\left\{h\left(t_i,t-t_i\right)\right\}$ is sufficiently small, we can always find $\alpha_i$  so that \eqref{eq:35}  holds for arbitrary  $\theta(t)$. Since time $t$ is a continuous variable, the constraint in \eqref{eq:35} has to be satisfied for all values of $t$, $0\leq t \leq \TRx$, and thus there is an infinite number  of constraints, each of which corresponds to one value of $t$. Therefore, we simplify the problem by relaxing the constraints to hold only for  a finite number of time instants $t=t_n=n\Delta t_n$, where $n=1,\dots,N$ and $\Delta t_n=\TRx/N$. Then, the proposed optimization problem for the design of $\alpha_i$ is formulated as follows
			\begin{align}\label{eq:36}
				\underset{\alpha_i\geq 0, \forall i}{\min} &A=\sum_{i=1}^I \alpha_i \\\nonumber
				\text{s.t. }  &\sum_{i,t_i<t}\alpha_i\left( m\left(t_i,n\Delta t_n-t_i\right)- \beta  \sigma\left(t_i,n\Delta t_n-t_i\right)\right)\geq \theta(n\Delta t_n), \text{ for } n=1,\dots, N, 
			\end{align}
		where $m\left(t,\tau\right)$  and   $\sigma\left(t,\tau\right)$ are the mean \eqref{eq:8} and the standard deviation \eqref{eq:22} of $h\left(t,\tau\right)$, respectively.
		Since $m(t,
		\tau)$ and $\sigma(t,\tau)$ do not oscillate  but are  well-behaved and smooth functions of $t$ as shown in Section~\ref{sec:6}, a small value of $N$ (e.g., $N=5$) is usually enough to meet  the continuous constraint \eqref{eq:35} for all $t$.
		 Having $m\left(t,\tau\right)$ in \eqref{eq:8} and   $\sigma\left(t,\tau\right)$ in \eqref{eq:22} and treating the $\alpha_i$ as real numbers, \eqref{eq:36} can be readily solved numerically as a linear program. We note that although the numbers of drug molecules $\alpha_i$ are integers, for tractability,  we  solve \eqref{eq:36} for real  $\alpha_i$ and quantize  the results to the nearest integer values.

		We note that the problem in \eqref{eq:36} is statistical in nature and provides  guidance for the offline design of the drug delivery system.

		\subsection{System Performance} \label{sub4:3}

		Since $g(t)\geq \theta(t)$ is required for proper operation of the system, we  evaluate the system performance in terms of the probability that the drug absorption rate satisfies the target rate $g(t)\geq \theta(t)$, denoted by $\Pt(t)=\mathrm{Pr}\left\{g(t)\geq \theta(t)\right\}$. 	
		$\Pt(t)$ is given in the following theorem.
		\begin{theorem}
			The system performance  metric $\Pt(t)$ can be expressed as
			\begin{align}\label{eq:38}
				\Pt(t)=&1-f_{\alpha_1 h\left(t_1,t-t_1\right)}\left(\theta(t)\right)\ast \dots \ast f_{\alpha_{i-1} h\left(t_{i-1},t-t_{i-1}\right)}\left(\theta(t)\right)\ast  F_{\alpha_{i} h\left(t_{i},t-t_{i}\right)}\left(\theta(t)\right),
			\end{align}
			where $\ast$ denotes  convolution, and $i=1,2,\dots$ satisfies $t_{i}\leq t$. In \eqref{eq:38}, we define $f_{\alpha_{i} h\left(t_{i},t-t_{i}\right)}\left(\theta(t)\right)=1/\alpha_{i}\times f_{ h\left(t_{i},t-t_{i}\right)}\left(\theta(t)/\alpha_{i}\right)$ and $F_{\alpha_{i} h\left(t_{i},t-t_{i}\right)}\left(\theta(t)\right)=F_{h\left(t_{i},t-t_{i}\right)}\left(\theta(t)/\alpha_{i}\right)$.
		\end{theorem}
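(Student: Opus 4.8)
The plan is to recognize $\Pt(t)$ as one minus the CDF of the absorption rate $g(t)$ evaluated at the target $\theta(t)$, and then to exploit that $g(t)$ is a sum of independently scaled copies of the time-variant CIR whose PDF and CDF have already been derived in Theorem~\ref{theo:3} and Corollary~\ref{theo:4}. First I would write $\Pt(t)=\Pr\{g(t)\ge\theta(t)\}=1-\Pr\{g(t)<\theta(t)\}=1-F_{g(t)}(\theta(t))$, and recall from \eqref{eq:5a} that $g(t)=\sum_{i\,:\,t_i<t}\alpha_i h(t_i,t-t_i)$. Abbreviating $Y_i:=\alpha_i h(t_i,t-t_i)$, this is a finite sum $g(t)=\sum_{i} Y_i$ over all releases with $t_i\le t$, and treating the $Y_i$ as mutually independent the PDF of the sum factorizes into the convolution $f_{g(t)}=f_{Y_1}\ast\cdots\ast f_{Y_i}$.

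Next I would invoke the elementary identity that integrating a convolution of PDFs amounts to replacing one factor by its CDF, namely $F_{g(t)}(x)=\int_{0}^{x}\bigl(f_{Y_1}\ast\cdots\ast f_{Y_i}\bigr)(u)\,\diff u=\bigl(f_{Y_1}\ast\cdots\ast f_{Y_{i-1}}\ast F_{Y_i}\bigr)(x)$. This follows by writing $F_{Y_i}(x-v)=\int_{0}^{x-v} f_{Y_i}(s)\,\diff s$, applying Fubini to interchange the outer integral with the convolution, and performing the single change of variables $u=v+s$. Evaluating at $x=\theta(t)$ and substituting into the first step produces the stated form of $\Pt(t)$.

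Finally, the scaling of a random variable by the deterministic constant $\alpha_i$ gives $f_{Y_i}(\theta)=\tfrac{1}{\alpha_i}f_{h(t_i,t-t_i)}(\theta/\alpha_i)$ and $F_{Y_i}(\theta)=F_{h(t_i,t-t_i)}(\theta/\alpha_i)$, which are exactly the auxiliary definitions appended to the theorem; these express every factor through the CIR statistics of Theorem~\ref{theo:3} and Corollary~\ref{theo:4}, completing the derivation.

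The hard part will be justifying the independence of the $Y_i$ across releases. A single diffusing $\Tx$ follows one correlated Brownian trajectory, so the distances $r(t_i)$ at distinct release instants are in reality correlated, and hence the convolution formula is exact only under the assumption that the per-release CIRs are independent. I would therefore present this step as the governing modeling assumption (or approximation) rather than as an identity; once independence is granted, the remaining convolution-to-CDF manipulation and the linear scaling by $\alpha_i$ are entirely routine.
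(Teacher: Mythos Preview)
Your proposal is correct and follows essentially the same route as the paper: write $\Pt(t)=1-F_{g(t)}(\theta(t))$, factor the PDF of $g(t)$ as a convolution of the per-release scaled CIR PDFs by independence, and then use the integration property of convolution to replace one factor by its CDF. Your explicit caution that the independence across releases is a modeling assumption (the paper simply asserts ``independent releases at $t_i$'') is well placed and, if anything, more careful than the original.
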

		\begin{proof}
			From the definition of the CDF, we have
			\begin{align}\label{eq:38a}
				&\Pt(t)=1-F_{g(t)}\left\{\theta(t)\right\}=1-\int_0^{\theta(t)}f_{g(t)}(g)\diff g.
			\end{align}
			Due to the summation of independent random variables in \eqref{eq:5a}, i.e., independent releases at $t_i$, we have
			\begin{align}\label{eq:38b}
				f_{g(t)}(g)=\left(f_{\alpha_1 h\left(t_1,t-t_1\right)}\ast \dots \ast f_{\alpha_{i} h\left(t_{i},t-t_{i}\right)}\right)(g).
			\end{align}
			Substituting \eqref{eq:38b} into \eqref{eq:38a}, then using the integration property of the convolution, i.e.,
			\begin{align}\label{eq:39}
				&\int_0^{\theta(t)}\left(f_{\alpha_1 h\left(t_1,t-t_1\right)}\ast \dots \ast f_{\alpha_{i} h\left(t_{i},t-t_{i}\right)}\right)(g)\diff g=f_{\alpha_1 h\left(t_1,t-t_1\right)}\left(\theta(t)\right)\ast \dots \ast \int_0^{\theta(t)}f_{\alpha_{i} h\left(t_{i},t-t_{i}\right)}(g)\diff g,
			\end{align}
			and using the definition of the CDF, we obtain \eqref{eq:38}. 
		\end{proof}
		
			We note that the analytical expressions for the PDF and CDF of $h(t,\tau)$ in Theorem~\ref{theo:3} and Corollary~\ref{theo:4}, respectively, are not in closed form. Nevertheless, the evaluation of the system performance in \eqref{eq:38} can be approximated by a discrete convolution which can be easily evaluated numerically.

		Furthermore, we note that a minimum value of $\Pt(t)$ can be guaranteed based on the statistical moments of the CIR without  knowledge of the  PDF and the CDF as shown in the following proposition.
		\begin{proposition} \label{prop:1}
			For a given solution of \eqref{eq:35}, a lower bound on $\Pt(t)=\mathrm{Pr}\left\{g(t)\geq \theta(t)\right\}$ is given as follows
			\begin{align}\label{eq:37}
				&\Pt(t)\geq 1-\frac{1}{\beta^2}.
			\end{align}
		\end{proposition}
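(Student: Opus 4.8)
The plan is to recognize the claimed bound as a direct consequence of Chebyshev's inequality applied to the random variable $g(t)$, combined with the design constraint already enforced in \eqref{eq:35}. To fix notation, I would write $\mu=\mathrm{E}\{g(t)\}$ and $\sigma=\mathrm{V}\{g(t)\}$ for the mean and standard deviation of $g(t)$. The crucial observation is that any feasible $\alpha_i$, i.e.\ any solution of \eqref{eq:35}, guarantees $\mu-\beta\sigma\geq\theta(t)$, so the target threshold $\theta(t)$ lies at least $\beta$ standard deviations below the mean of $g(t)$.

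First I would invoke the monotonicity of probability: since $\theta(t)\leq\mu-\beta\sigma$, the event $\{g(t)\geq\mu-\beta\sigma\}$ is contained in the event $\{g(t)\geq\theta(t)\}$, so that
\[
\Pt(t)=\mathrm{Pr}\{g(t)\geq\theta(t)\}\geq\mathrm{Pr}\{g(t)\geq\mu-\beta\sigma\}.
\]
Second, I would pass to the complementary event and bound it. Writing $\mathrm{Pr}\{g(t)\geq\mu-\beta\sigma\}=1-\mathrm{Pr}\{g(t)<\mu-\beta\sigma\}$ and noting the inclusion $\{g(t)<\mu-\beta\sigma\}\subseteq\{|g(t)-\mu|\geq\beta\sigma\}$, the two-sided Chebyshev inequality $\mathrm{Pr}\{|g(t)-\mu|\geq\beta\sigma\}\leq1/\beta^2$ gives $\mathrm{Pr}\{g(t)<\mu-\beta\sigma\}\leq1/\beta^2$. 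Chaining the two steps yields $\Pt(t)\geq1-1/\beta^2$, which is exactly \eqref{eq:37}.

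There is no substantial obstacle here; the argument is a textbook application of Chebyshev's inequality, and the only points demanding care are bookkeeping. One must check that the direction of the inequality in \eqref{eq:35} correctly places $\theta(t)$ on the lower tail of $g(t)$ so that the event containment runs the right way, and one must use the symmetric form of Chebyshev to recover precisely the factor $1/\beta^2$. I would remark that a sharper bound $1-1/(1+\beta^2)$ is in fact attainable via the one-sided Cantelli inequality, since only the lower tail of $g(t)$ is relevant, but the looser symmetric bound \eqref{eq:37} is the one claimed and already suffices to furnish the desired reliability guarantee for the drug-delivery design.
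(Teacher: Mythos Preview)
Your proposal is correct and follows essentially the same approach as the paper's own proof: both invoke the constraint \eqref{eq:35} to place $\theta(t)$ at least $\beta$ standard deviations below $\mathrm{E}\{g(t)\}$ and then apply the two-sided Chebyshev inequality. The only cosmetic difference is the order of operations---the paper first bounds $\Pt(t)\geq\Pr\{|g(t)-\mu|\leq\mu-\theta(t)\}$, applies Chebyshev, and then uses \eqref{eq:35}, whereas you substitute the constraint first---but the underlying argument is identical, and your remark about the sharper Cantelli bound is a valid aside.
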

		\begin{proof}
			By using \eqref{eq:35} and the Chebyshev inequality \cite{PP:02:Book}, we obtain
			\begin{align}\label{eq:37a}
				&\Pt(t)\overset{(a)}{\geq} \mathrm{Pr}\Big\{\left|g(t)-\mathrm{E}\left\{g(t)\right\}\right|\leq \mathrm{E}\left\{g(t)\right\}-\theta(t)\Big\}\overset{(b)}{\geq} 1-\frac{\mathrm{V}^2\left\{g(t)\right\}}{ \left(\mathrm{E}\left\{g(t)\right\}-\theta(t)\right)^2}\overset{(c)}{\geq}  1-\frac{1}{\beta^2},
			\end{align}
			where $(a)$ can be obtained  by expanding the absolute value on the right-hand side, $(b)$ is due to the  Chebyshev inequality, and $(c)$ is due to \eqref{eq:35}. This completes the proof.
		\end{proof}
		\begin{remark} \label{re:3}
			Proposition~\ref{prop:1} is not only useful  for evaluating the system performance, but also provides a guideline for the design of the release profile of drugs in \eqref{eq:36}. For example,  to ensure a high absorption rate probability of $\Pt(t) \geq 0.75$, from \eqref{eq:37}, we need to set the $\beta$ coefficient in \eqref{eq:36} as  $\beta=2$. Note that a useful bound can only be obtained based on \eqref{eq:37} when $\beta>1$ and \eqref{eq:35} is satisfied.
		\end{remark}

		\section{MC System Design for Imperfect CSI}\label{sec:5}

		In this section, we  apply the stochastic  analysis presented in Section~\ref{sec:3} for the  design of MC systems with imperfect CSI. The CSI  is imperfect due to the movement of the transceivers and  assumed to be known only at the beginning of a bit frame.
		In particular, we optimize three design parameters of a diffusive mobile MC system employing on-off keying modulation and threshold detection, namely the detection threshold at  $\Rx$, the release profile at  $\Tx$, and the time duration of a bit frame. By choosing the optimal values of those three parameters, we can improve the system performance while keeping the overall system relatively simple. First, we optimize the detection threshold for  minimization of the maximum BER in a frame assuming a uniform release profile. This approach can be employed  in  very simple MC systems where $\Tx$ is not capable of adjusting the number of released molecules. Second, we optimize the release profile at  $\Tx$  for  minimization of the  maximum BER in a frame, assuming a fixed detection threshold and a fixed  number of molecules available for transmission in the frame. This second approach to MC optimization improves the system performance in terms of BER but  requires a mechanism to control the number of  molecules released at  $\Tx$. Third, we design the optimal duration of the bit frame satisfying a constraint on the efficiency of molecule usage. Thus, this third approach improves the system performance in terms of the efficiency of molecule usage. The three proposed designs can be performed offline. Furthermore, they can be combined with each other or carried out separately depending on the capabilities and requirements of the system. For all three designs, as a first step,  we
		need to derive the BER as a
		function of the number of released molecules.

		\subsection{Detection and BER}
		We consider a simple threshold detector at  $\Rx$, where the received signal $q_i$ is compared with a detection threshold, denoted by $\xi$, in order to determine the detected bit $\hat{b}_i$  as follows   
		\begin{align} \label{eq:50}
		\hat{b}_i=
		\begin{cases}
		1 \text{ if } q_i>\xi,\\
		0 \text{ if } q_i \leq \xi.
		\end{cases}
		\end{align}

		Given the assumption of no ISI and $\Pr(\tilde{b}_i)=1/2$, from \eqref{eq:51} and \eqref{eq:50}, the  error probability of the $i$-th bit, denoted by $\Pe(b_i)$, can be simplified as \cite[Eq.~(12)]{AJN:17:CL} 
		\begin{align}\label{eq:53}
		\Pe(b_i)=\frac{1}{2}-\frac{1}{4}\erf\left(\frac{\xi-\eta}{\sqrt{2\eta}}\right)+\frac{1}{4} \int_0^\infty f_{r(t)}(r_i) \erf\left(\zeta_i(\xi,\alpha_i)\right)\diff r_i,
		\end{align}                                              
		where	 $f_{r(t)}(r_i)$ is given in \eqref{eq:5}, $r_i$ is $r(t_i)$ for brevity,  and $\zeta_i(\xi,\alpha_i)=\frac{\xi-\mu_{i,1}}{\sigma_{i,1}\sqrt{2}}=\frac{\xi-\left(\alpha_i p(t_i, \Tb)+\eta\right)}{\sqrt{2\left(\alpha_i p(t_i, \Tb)(1-p(t_i, \Tb))+\eta\right)}}$. Note that $\Pe(b_i)$ depends on $i$ since the distance $r(t_i)$ between the transceivers is a function of release time $t_i$. 
		
		\subsection{Optimal Detection Threshold for Uniform Release}\label{sub5:2}
		
		We first consider system  design for uniform release, where the  number of available molecules is uniformly allocated across all bits of a frame. To facilitate reliable communication, our objective is to optimize the detection threshold, $\xi$, such that   the maximum value of the error rate of the bits in a frame is minimized, given the total number of available molecules in a frame, $A$, i.e.,
		\begin{align}
			\min_\xi\max_i \left\{\Pe(b_i)\right\}\hspace{9mm} \mathrm{s.t.}\hspace{2mm} \alpha_i=A/I.
		\end{align}  
		From \eqref{eq:53}, the problem is equivalent to
		\begin{align} \label{eq:58}
		\min_{\xi}\max_i\left\{ \int_0^\infty f_{r(t)}(r_i) \erf\left(\zeta_i(\xi,\alpha_i)\right)\diff r_i-\erf\left(\frac{\xi-\eta}{\sqrt{2\eta}}\right)\right\}\hspace{9mm} \mathrm{s.t.}\hspace{2mm} \alpha_i=A/I.
		\end{align}
		 The following lemma reveals the convexity of the problem in \eqref{eq:58}. 
		\begin{lemma} \label{lem:2}
			For $\eta<\xi<\mu_{i,1}$, the objective function in   \eqref{eq:58} is  convex in $\xi$.
		\end{lemma}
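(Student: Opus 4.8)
The plan is to prove convexity by the standard route of showing that the second derivative of the objective in \eqref{eq:58} with respect to $\xi$ is nonnegative on the interval $\eta<\xi<\mu_{i,1}$. First I would split the objective into the integral term $\int_0^\infty f_{r(t)}(r_i)\erf\left(\zeta_i(\xi,\alpha_i)\right)\diff r_i$ and the term $-\erf\left(\frac{\xi-\eta}{\sqrt{2\eta}}\right)$, and treat each separately, since a sum of convex functions is convex. The computation rests on the single elementary fact that $\erf''(x)=-\frac{4x}{\sqrt{\pi}}\e^{-x^2}$, so $\erf$ is concave on the positive axis and convex on the negative axis.

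For the second term I would argue as follows: since $\erf$ is concave for positive arguments, $-\erf$ is convex there, and composing with the increasing affine map $\xi\mapsto(\xi-\eta)/\sqrt{2\eta}$ (whose argument is positive exactly when $\xi>\eta$) preserves convexity. Hence this term is convex for all $\xi>\eta$, independently of $\mu_{i,1}$.

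For the integral term the key observation is that, for each fixed $r_i$, the inner argument $\zeta_i(\xi,\alpha_i)=(\xi-\mu_{i,1})/(\sqrt{2}\,\sigma_{i,1})$ is an affine, strictly increasing function of $\xi$ with slope $1/(\sqrt{2}\,\sigma_{i,1})>0$, and that $\zeta_i<0$ precisely when $\xi<\mu_{i,1}$. Since $\erf$ is convex on the negative axis, the composition $\erf\left(\zeta_i(\xi,\alpha_i)\right)$ is then convex in $\xi$ as a convex function of an affine map; equivalently, a direct computation gives $\partial^2_\xi\erf(\zeta_i)=-\frac{2\zeta_i}{\sqrt{\pi}\,\sigma_{i,1}^2}\e^{-\zeta_i^2}\ge0$ whenever $\zeta_i\le0$. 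Because the weight $f_{r(t)}(r_i)\ge0$, differentiating twice under the integral sign yields $\int_0^\infty f_{r(t)}(r_i)\,\partial^2_\xi\erf(\zeta_i)\,\diff r_i\ge0$, so the integral term inherits convexity, and adding the two convex pieces completes the argument.

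The main obstacle I anticipate is not the sign analysis but the fact that $\mu_{i,1}$ and $\sigma_{i,1}$ appearing inside the integral are themselves functions of the integration variable $r_i$ through $p(t_i,\Tb)$ in \eqref{eq:33}: as $r_i$ grows, $p(t_i,\Tb)\to0$ and $\mu_{i,1}\to\eta$, so the threshold $\xi<\mu_{i,1}$ below which the integrand is convex shrinks toward $\eta$. To make the statement fully rigorous one must therefore argue that the condition $\zeta_i\le0$ holds over the effective support of $f_{r(t)}(r_i)$, or interpret $\mu_{i,1}$ as the relevant (e.g.\ worst-case) bound, so that pointwise convexity of the integrand holds wherever $f_{r(t)}(r_i)$ carries appreciable mass. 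A minor secondary point is justifying the interchange of differentiation and integration, which follows from the boundedness of $\partial^2_\xi\erf(\zeta_i)$ together with the integrability of $f_{r(t)}$.
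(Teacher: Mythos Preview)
Your approach is essentially the same as the paper's: both split the objective into the two $\erf$ pieces, use that $\erf$ is convex on $(-\infty,0]$ and concave on $[0,\infty)$, observe that $\zeta_i$ and $(\xi-\eta)/\sqrt{2\eta}$ are affine in $\xi$ with the appropriate signs on $\eta<\xi<\mu_{i,1}$, and conclude via the composition rule that each piece is convex.

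The one place where you go further than the paper is your final paragraph. You are right that $\mu_{i,1}=\alpha_i p(t_i,\Tb)+\eta$ and $\sigma_{i,1}^2$ depend on the integration variable $r_i$ through $p(t_i,\Tb)$ in \eqref{eq:33}, so that for large $r_i$ one has $\mu_{i,1}\to\eta$ and the condition $\xi<\mu_{i,1}$ cannot hold pointwise for every $r_i$ once $\xi>\eta$. The paper's proof (and its statement of the hypothesis $\xi<\mu_{i,1}$) glosses over this dependence and treats $\mu_{i,1}$ as if it were a fixed scalar, so the issue you flag is a genuine gap in the paper's own argument rather than in yours. Your suggested resolution --- interpreting the hypothesis as requiring $\zeta_i\le 0$ over the effective support of $f_{r(t)}$, or reading $\mu_{i,1}$ as a worst-case bound --- is exactly the kind of qualification needed to make the lemma precise.
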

		\begin{proof}
			Please refer to Appendix~\ref{app:3}.
		\end{proof}
		Note that  $\eta<\xi<\mu_{i,1}$ is intuitively satisfied for typical system parameters  since the decision threshold should be higher than the average noise level  when bit "$0$" is sent but should not exceed the mean of the received signal when bit "$1$" is sent. Otherwise, a high error rate would result.
	    Due to the convexity of problem \eqref{eq:58}, the global optimum  $\xi$ can be easily obtained by  numerical methods such as the interior-point method \cite{BV:04:Book}.
		
			\subsection{Optimal Release  with Fixed Detection Threshold}\label{sub5:2a}
			For the second proposed design, we aim to optimize the release profile, i.e., the number of  molecules available for release for each bit, $\alpha_i$, such that   $\max_i \left\{\Pe(b_i)\right\}$ is minimized given a total number of  molecules $A$ that are available for release in a frame
			\begin{align} \label{eq:55}
			\min_{\bm{\alpha}} \max_i \left\{\Pe(b_i)\right\} \hspace{9mm} \mathrm{s.t.}\hspace{2mm} \sum_{i=1}^I \alpha_i=A,
			\end{align}
			where 	  $\bm{\alpha}=\left[\alpha_i,\alpha_2,\dots,\alpha_I\right]$.
			
			For a given threshold $\xi$,  we can re-express \eqref{eq:55} based on \eqref{eq:53} as
			\begin{align} \label{eq:55a}
			\min_{\bm{\alpha}} \max_i \left\{ \int_0^\infty f_{r(t)}(r_i) \erf\left(\zeta_i\left(\xi,\alpha_i\right)\right)\diff r_i \right\}\hspace{9mm} \mathrm{s.t.} \hspace{2mm} \sum_{i=1}^I \alpha_i=A.
			\end{align}
			
			The following lemma states the convexity of the optimization problem in \eqref{eq:55a}.  
			\begin{lemma} \label{lem:1}
				For $\eta<\xi<\mu_{i,1}$, the objective function in   \eqref{eq:55a} is  convex in $\bm{\alpha}$.
			\end{lemma}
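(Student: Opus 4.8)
The plan is to mirror the structure of the proof of Lemma~\ref{lem:2}, exploiting that the only real difference is which variable we differentiate with respect to. First I would invoke the fact that the pointwise maximum of convex functions is convex, so it suffices to show that for each $i$ the term $J_i(\bm{\alpha})=\int_0^\infty f_{r(t)}(r_i)\,\erf\left(\zeta_i(\xi,\alpha_i)\right)\diff r_i$ is convex in $\bm{\alpha}$. Since $J_i$ depends on $\bm{\alpha}$ only through the single coordinate $\alpha_i$, its convexity in the vector $\bm{\alpha}$ is equivalent to convexity in the scalar $\alpha_i$. Moreover, because $f_{r(t)}(r_i)\geq 0$ and convexity is preserved under integration against a nonnegative weight, it is enough to prove that the integrand $\erf\left(\zeta_i(\xi,\alpha_i)\right)$ is convex in $\alpha_i$ for each fixed $r_i$, equivalently for each fixed value $p=p(t_i,\Tb)\in(0,1)$ determined by $r_i$ through \eqref{eq:33}.

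Writing $\zeta_i=(\xi-\eta-\alpha_i p)/\big(\sqrt{2}\sqrt{\alpha_i p(1-p)+\eta}\,\big)$ and using $\frac{\diff}{\diff z}\erf(z)=\frac{2}{\sqrt{\pi}}\e^{-z^2}$, I would compute
\begin{align}
\frac{\partial^2}{\partial \alpha_i^2}\erf(\zeta_i)=\frac{2}{\sqrt{\pi}}\e^{-\zeta_i^2}\left(\frac{\partial^2 \zeta_i}{\partial \alpha_i^2}-2\zeta_i\left(\frac{\partial \zeta_i}{\partial \alpha_i}\right)^2\right),
\end{align}
so that, since the prefactor is strictly positive, convexity reduces to showing the bracketed factor is nonnegative. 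This is exactly where the hypothesis $\eta<\xi<\mu_{i,1}$ enters. The condition $\xi<\mu_{i,1}=\alpha_i p+\eta$ forces the numerator of $\zeta_i$ to be negative, hence $\zeta_i<0$, which makes the cross term $-2\zeta_i(\partial\zeta_i/\partial\alpha_i)^2$ nonnegative. The condition $\xi>\eta$ is used to sign the remaining term: a direct differentiation shows that $\partial^2\zeta_i/\partial\alpha_i^2$ factors as a strictly positive prefactor times $\tfrac14 p\,p(1-p)\,\alpha_i + p\eta + \tfrac34 p(1-p)(\xi-\eta)$, each summand of which is nonnegative, with the last one being where $\xi>\eta$ is needed. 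Thus both terms in the bracket are nonnegative, $\erf(\zeta_i)$ is convex in $\alpha_i$, and integrating against $f_{r(t)}(r_i)$ and taking the maximum over $i$ then yields the claim.

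The hard part is the second step, not the bookkeeping: the composition $\erf(\zeta_i)$ is \emph{not} automatically convex merely because $\zeta_i$ is a nice function of $\alpha_i$, and the cross term $-2\zeta_i(\partial\zeta_i/\partial\alpha_i)^2$ can take either sign in general. The entire role of the hypothesis $\eta<\xi<\mu_{i,1}$ is to pin down $\zeta_i<0$ so that this term cooperates rather than obstructs, after which signing $\partial^2\zeta_i/\partial\alpha_i^2$ is a routine (if slightly tedious) factorization. One caveat I would flag, as in the discussion following Lemma~\ref{lem:2}, is that $\mu_{i,1}$ itself depends on $r_i$ through $p$, so $\xi<\mu_{i,1}$ is a statement about the physically relevant detection regime rather than an identity holding for every $r_i$ in the integration range; I would argue it holds over the range of $r_i$ carrying essentially all the probability mass, which is precisely the operating regime of interest.
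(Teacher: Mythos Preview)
Your proposal is correct and uses essentially the same ingredients as the paper: the paper's proof (Appendix~\ref{app:2}) observes that $\erf$ is convex and non-decreasing on negative arguments, that $\zeta_i<0$ when $\xi<\mu_{i,1}$, and that $\zeta_i$ is convex in $\bm{\alpha}$ when $\eta<\xi$, and then invokes the standard composition rule \cite[Eq.~(3.10)]{BV:04:Book}. Your direct second-derivative computation is just this composition rule unpacked---the term $-2\zeta_i(\partial\zeta_i/\partial\alpha_i)^2$ is exactly $\erf''(\zeta_i)/\erf'(\zeta_i)\cdot(\partial\zeta_i/\partial\alpha_i)^2$, and your factorization establishing $\partial^2\zeta_i/\partial\alpha_i^2\geq 0$ is precisely the convexity of $\zeta_i$ that the paper asserts without detail; so the two arguments coincide, yours being more explicit where the paper is terse.
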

			\begin{proof}
				Please refer to Appendix~\ref{app:2}.
			\end{proof}
					
			 	Hence, the global optimum of \eqref{eq:55a} can be readily obtained by numerical methods such as the interior-point method. 
			 	
			 	Note that, for tractability, similar to the proposed drug delivery  design, we solve \eqref{eq:58} and \eqref{eq:55a} for real $\alpha_i$ and quantize the results to the nearest integer values.

		\subsection{Optimal Time Duration of a Bit Frame} \label{sub5:3}
		
		In the third proposed design, we consider the  molecule usage efficiency for communication. We evaluate the efficiency based on  $p(t, \Tb)$, i.e., the probability that a signaling molecule is absorbed  during  bit interval  $\Tb$ after its  release at time $t$.	If $p(t, \Tb)$ is too small, none of the  released molecules may actually reach the receiver and thus the molecules are wasted, i.e., the system has low efficiency. Hence, we want to keep  $p(t, \Tb)$ above a certain value, denoted by $\psi$.	Intuitively,  as on average $h(t,\tau)$  decreases over time $t$, $p(t, \Tb)$, which is the integral over $h(t,\tau)$ with respect to $\tau$, also on average decreases over time. Therefore, our objective is to choose the maximum duration of a bit frame, denoted by $T^\star$, such that $p(t, \Tb)>\psi$ for $t\leq T^\star-\Tb$, where $t$ is the release time.  
		
		Since   $p(t, \Tb)$ is a random process, we cannot enforce $p(t, \Tb)>\psi$ but can only bound the probability that $p(t, \Tb)>\psi$ is satisfied, i.e.,  $\Pr\left(p(t, \Tb)>\psi\right)\geq P$, where $P$ is a design parameter.	
		 Moreover, we have 
		\begin{align}
		\Pr\left(p(t, \Tb)>\psi\right)=1-F_{p(t, \Tb)}(\psi)\overset{(a)}{=}F_{r(t)}(\tilde{r}(\psi)),
		\end{align}
		where equality $(a)$ is due to \eqref{eq:61}. As such, we can re-express the problem as maximizing the duration of a bit frame such that $F_{r(t)}(\tilde{r}(\psi))\geq P$ holds. To this end, in the following lemma, we analyze $F_{r(t)}(\tilde{r}(\psi))$ as a function of time $t$. 
		
			\begin{lemma}\label{lem:3}
				$F_{r(t)}(\tilde{r}(\psi))$ is a decreasing function of time $t$.
			\end{lemma}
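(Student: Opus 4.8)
The plan is to reduce the claim to the sign of a single time-derivative. First I would note that, by \eqref{eq:33}, $p(t,\Tb)$ depends on $t$ only through $r(t)$, so the map $r\mapsto p=\frac{\arx}{r}\erfc\!\big(\frac{r-\arx}{2\sqrt{D_1\Tb}}\big)$ is itself time-independent; since $p'(\tilde r)<0$ (Corollary~\ref{cor:4}), this map is strictly decreasing and its inverse $\tilde r(\psi)$ is a fixed constant, independent of $t$. I would also record the regime that makes the design meaningful: the target must be attainable at $t=0$, where $r(0)=r_0$ is deterministic, which forces $\psi\le p(0,\Tb)$, equivalently $\tilde r(\psi)\ge r_0$. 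Writing $\tilde r:=\tilde r(\psi)$, the task becomes to show that $t\mapsto F_{r(t)}(\tilde r)$ is strictly decreasing.

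Second, I would write $F_{r(t)}(\tilde r)=\int_0^{\tilde r}f_{r(t)}(r)\,\diff r$ with $f_{r(t)}$ from \eqref{eq:5} and differentiate in $t$. The cleanest route is to observe that $r(t)=|\mathbf X(t)|$, where $\mathbf X(t)$ is the relative displacement vector, a three-dimensional Brownian motion with diffusion coefficient $D_2$ started at distance $r_0$; hence its radial density solves the radial heat equation $\partial_t f_{r(t)}=D_2\big(\partial_{rr}f_{r(t)}-\frac{2}{r}\partial_r f_{r(t)}+\frac{2}{r^2}f_{r(t)}\big)$ (valid here because the configuration is azimuthally symmetric about $\mathbf r_0$, so the angular contributions integrate away). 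Integrating this over $[0,\tilde r]$ and combining the last two terms as a total derivative collapses everything to a boundary term at $r=\tilde r$, the contribution at $r=0$ vanishing because $f_{r(t)}(r)=O(r^2)$ there. This yields the compact identity $\frac{\partial}{\partial t}F_{r(t)}(\tilde r)=D_2\big(f_{r(t)}'(\tilde r)-\frac{2}{\tilde r}f_{r(t)}(\tilde r)\big)$. One can alternatively differentiate \eqref{eq:5} directly; the same expression results after cancellation.

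Third, I would substitute \eqref{eq:5}. With $a=\frac{1}{4D_2 t}$ and $c=\frac{r_0}{2D_2 t}$, a short computation gives $\frac{\partial}{\partial t}F_{r(t)}(\tilde r)=D_2 K\,\e^{-a\tilde r^2}\big[c\tilde r\cosh(c\tilde r)-(1+2a\tilde r^2)\sinh(c\tilde r)\big]$ with $K>0$, so it remains to prove the bracket is negative, i.e. $c\tilde r\cosh(c\tilde r)<(1+2a\tilde r^2)\sinh(c\tilde r)$. Here I would invoke the regime $\tilde r\ge r_0$: since $2a\tilde r^2=\frac{\tilde r}{r_0}\,c\tilde r\ge c\tilde r$, it suffices to establish the $t$- and $r_0$-free inequality $x\cosh x<(1+x)\sinh x$ for all $x>0$, applied at $x=c\tilde r>0$. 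This is equivalent to $\sinh x> x\e^{-x}$, i.e. to $\e^{2x}>1+2x$, which holds for every $x>0$. Assembling these pieces shows $\frac{\partial}{\partial t}F_{r(t)}(\tilde r)<0$, proving the lemma.

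The delicate points are twofold. The first is pinning down the regime $\tilde r(\psi)\ge r_0$ in which monotonicity actually holds: for $\tilde r(\psi)<r_0$ the function $F_{r(t)}(\tilde r(\psi))$ is genuinely non-monotone (it first rises, then falls), so the step $2a\tilde r^2\ge c\tilde r$ that drives the whole argument truly relies on the design constraint $\psi\le p(0,\Tb)$, and I would state this explicitly rather than leave it implicit in the lemma. The second is making the differentiation rigorous, namely justifying differentiation under the integral and the vanishing of the $r=0$ boundary term (equivalently, invoking the radial heat equation). Once the regime is fixed, all remaining analytic content reduces to the elementary bound $\e^{2x}>1+2x$.
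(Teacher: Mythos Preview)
Your argument is correct and, in fact, more complete than the paper's. The paper expands $\sinh$ as a power series, differentiates $F_{r(t)}(\tilde r)=\int_0^{\tilde r} f_{r(t)}(x)\,\diff x$ term by term, and arrives at an integrand carrying the factor $\big(-2n-\tfrac{3}{2}+\tfrac{x^2+r_0^2}{4D_2 t}\big)$; it then simply asserts that the resulting sum is $\le 0$ without further justification. That factor is not of constant sign in $n$ and $x$, so the paper's final step is at best a heuristic. Your route is genuinely different: you use the forward Kolmogorov equation for the Bessel-3 density to collapse $\partial_t F_{r(t)}(\tilde r)$ to the single boundary expression $D_2\big(f_{r(t)}'(\tilde r)-\tfrac{2}{\tilde r}f_{r(t)}(\tilde r)\big)$, substitute \eqref{eq:5}, and reduce everything to the elementary inequality $\e^{2x}>1+2x$. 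This is cleaner and fully rigorous.

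More importantly, you identify something the paper does not: the monotonicity claim is \emph{only} valid in the regime $\tilde r(\psi)\ge r_0$, equivalently $\psi\le p(0,\Tb)$. Your observation that for $\tilde r<r_0$ the CDF starts at $0$ (since $r(0)=r_0$ is deterministic), becomes positive, and decays back to $0$---hence is non-monotone---is a genuine counterexample to the lemma as literally stated. The paper's series argument cannot circumvent this, because the claim is simply false without the extra hypothesis. Your proof buys both rigor (the sign is settled by a one-line inequality rather than an unresolved infinite sum) and precision (the exact regime of validity is exposed and used). I would keep your derivation as is and state the condition $\psi\le p(0,\Tb)$ explicitly in the lemma; it is the natural design assumption anyway, since otherwise $F_{r(t)}(\tilde r(\psi))\ge P$ already fails at $t=0$ for any $P>0$.
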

			\begin{proof}
				Please refer to Appendix~\ref{app:4}.
			\end{proof}
			
			Since Lemma~\ref{lem:3} shows that $F_{r(t)}(\tilde{r}(\psi))$ is a decreasing function of time, the maximum duration of a bit frame satisfying $F_{r(t)}(\tilde{r}(\psi))\geq P$ can be found by
			 solving $F_{r(T^\star-\Tb)}(\tilde{r}(\psi))= P$, where $F_{r(T^\star-\Tb)}(\tilde{r}(\psi))$ is given   in \eqref{eq:29}.

		\begin{remark}
			If multiple frames are transmitted, the proposed design framework  can be applied to 
			each frame, respectively. However, the optimal designs may be different for different frames due to the moving transceivers, whose distances are assumed to be perfectly estimated at the start of each frame.
		\end{remark}

	\begin{remark}
		Here, we discuss a system with an absorbing receiver. Nevertheless,
		the proposed  optimal design framework can also be applied to  transparent receivers. For a transparent receiver, $p(t, \Tb)$ is the  probability that a molecule is observed inside the volume 	of the transparent receiver at time $\Tb$ after its release at time $t$ at the center of $\Tx$.
	\end{remark}

	\section{Numerical Results} \label{sec:6}
	
	In this section, we provide  numerical results to evaluate the accuracy of the derived expressions and analyze the performance of the MC systems in the considered application scenarios. We use the set of simulation parameters summarized  in Table~\ref{tab:1}, unless stated otherwise. The parameters are chosen to match the actual system parameters in drug delivery systems, as will be explained in detail in Subsection~\ref{sub7:2}. 

	\begin{table}[t!]
		\captionof{table}{System parameters used for numerical results}
		\centering
		\label{tab:1}
		\begin{tabular}{c|c||c|c}
			\toprule
			Parameter	&	Value	&	Parameter	&	Value \\[-0.05cm]
			\midrule
			$D_\Tx$ [$\mathrm{m^2/s}$]	&	$1\times 10^{-14}$		&	$D_\Rx$ [$\mathrm{m^2/s}$]	&	$0$ 	\\
			$D_\mathrm{X}$ [$\mathrm{m^2/s}$]	&	$8\times 10^{-11}$		&	$r_0$ [$\mathrm{m}$]	&	$10\times 10^{-6}$ 	\\
			$\atx$ [$\mathrm{m}$]	&	$1\times 10^{-7}$&	$\arx$ [$\mathrm{m}$]	&	$1\times 10^{-6}$		\\
			$T$ [$\si{\hour}$]	&	$24$& $\TRx$ [$\si{\hour}$]	 &	$24$	\\
			$I$	&	$3000$	&	$N$	&	$5$\\
			$\Tb$[$\si{\second}$] & $28.8$& $\theta(t) [\si{\second^{-1}}]$ & $1$\\
			\bottomrule
		\end{tabular}
	\end{table}
	
	\subsection{Time-variant Channel Analysis}\label{sub7:1}

	\begin{figure*}[!tbp]
		\centering
		\begin{minipage}[t]{0.49\textwidth}\hspace*{-5 mm}
			\centering
			\resizebox{1.05\linewidth}{!}{
				\includegraphics[scale=0.55]{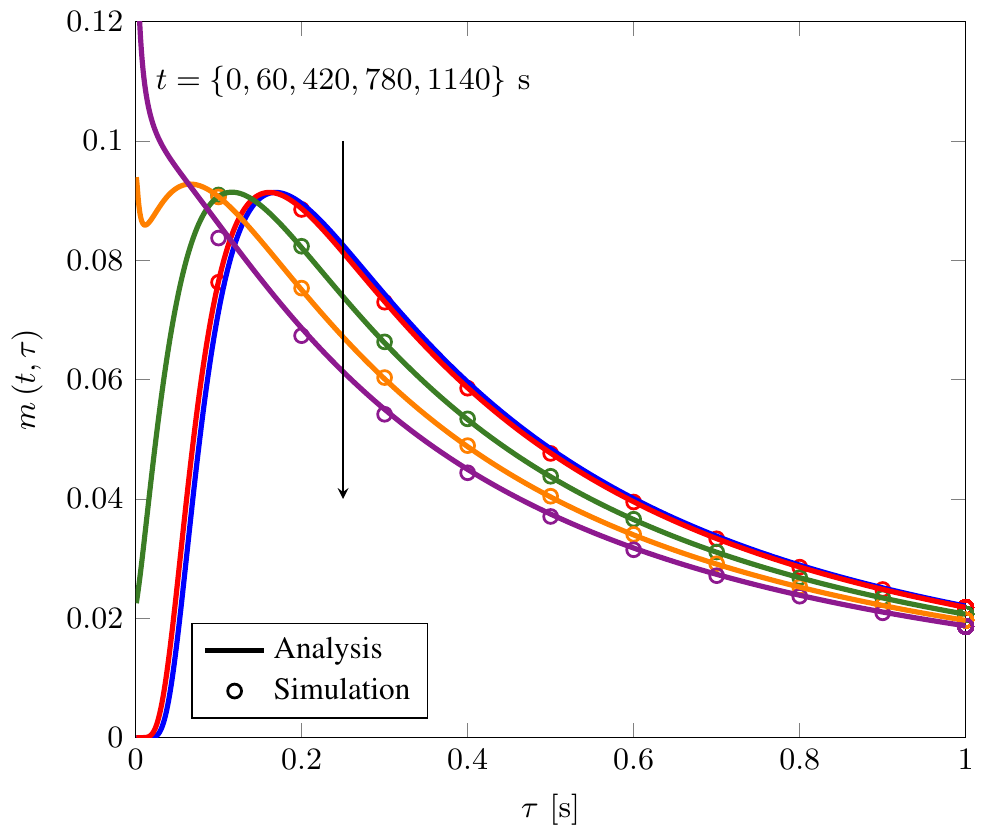}}\vspace*{-4 mm}
			\caption{
				Mean of the CIR $h(r(t),\tau)$ as	a function of time $\tau$.} 
			\label{fig:3}
		\end{minipage}
		\hfill
		\vspace*{-1 mm}
		\begin{minipage}[t]{0.49\textwidth}
			\centering
			\resizebox{1.05\linewidth}{!}{
				\includegraphics[scale=0.55]{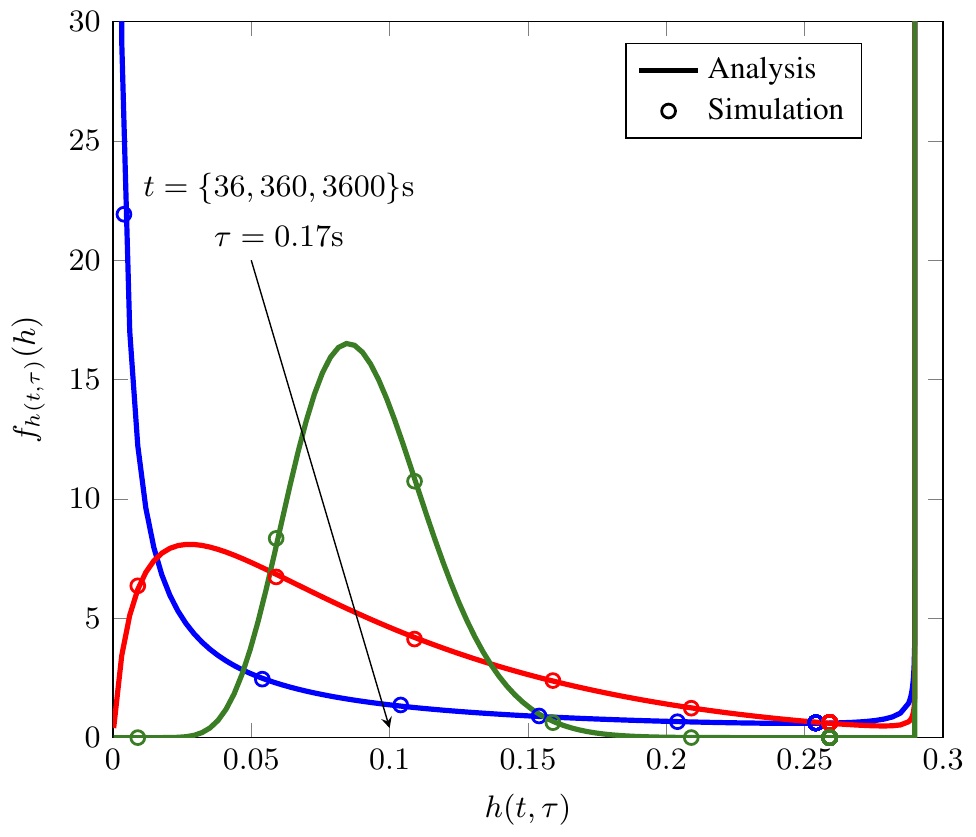}}\vspace*{-4 mm}
			\caption{
				PDF of the CIR $f_{h(t,\tau)}(h)$ for $\tau=\SI{0.17}{\second}$ and $t=\{36,360,3600\}$ \si{\second}. 
			}
			\label{fig:6}
		\end{minipage}
		\vspace*{-1 mm}
		
	\end{figure*}

		In this subsection, we numerically analyze the time-variant MC channel. For verification of the accuracy of the  expressions derived in Section~\ref{sec:3}, we employ a hybrid particle-based simulation approach. In particular, we use particle-based simulation of the Brownian motion of the transceivers to generate realizations of the random distance between $\Tx$ and $\Rx$, $r(t)$. Then, we use Monte Carlo simulation to obtain the desired statistical results by suitably  averaging the CIRs   \eqref{eq:6} obtained for the realizations of $r(t)$. For particle-based simulation of the Brownian motion of  $\Tx$,    $\Tx$  performs a random walk with a random step size in space in every discrete time step of length $\Delta t^{\mathrm{st}}=\SI{1}{\milli\second}$. The length of  each step in space is modeled as a Gaussian random variable with zero mean and standard deviation  $\sqrt{2D_\Tx\Delta t^{\mathrm{st}}}$. Furthermore, we also take into account the reflection of   $\Tx$ upon collision with $\Rx$. When  $\Tx$ hits  $\Rx$, we assume  that it bounces back to the position it had at the beginning of the
		considered simulation step \cite{DNE:15:MBMS}. 

		Fig.~\ref{fig:3} shows the mean of the CIR, $m(t,\tau)$, as a function of $\tau$. 	In general, for large $\tau$,  $m(t,\tau)$ decreases when $t$ increases  as expected since the transceivers move away from each other on average.  For large $\tau$, $m(t,\tau)$ also decreases when $\tau$ increases as would be the case in a static system.
	 Note that in the  simulations, unlike the analysis, we have taken into account the reflection of  $\Tx$ when it hits  $\Rx$. Therefore, the good agreement between simulation and analytical results in Fig.~\ref{fig:3} suggests that the reflection of  $\Tx$ does not have a significant impact on the statistical properties of $h(t,\tau)$ and the approximation in \eqref{eq:5} and the analytical results obtained based on it are valid.

		In Fig.~\ref{fig:6}, we plot the PDF of the CIR for time instances $t=\left\{36,360,3600\right\}\si{\second}$ and $\tau=\SI{0.17}{\second}$.  Fig.~\ref{fig:6} shows that when $t$ increases, a smaller value of $h(t,\tau)$ is more likely to occur since, on average, the transceivers move away from each other. When $t$ is very large,  it is  likely that the molecules cannot reach $\Rx$, and hence, cannot be absorbed, consequently $h(t,\tau)\rightarrow 0$. We also observe that $h(t,\tau)$ has a maximum value and $f_{h(t,\tau)}\left(h(t,\tau)\right)\rightarrow\infty$ when $h(t,\tau)$ approaches the maximum value as stated in \eqref{eq:25}. For example,  $h(t=\SI{36}{\second},\tau=\SI{0.17}{\second})$ is random but its maximum possible value  is  $0.29$ and $f_{h(t=\SI{36}{\second},\tau=\SI{0.17}{\second})}\left(0.29\right)\rightarrow\infty$.
		
		Fig.~\ref{fig:3} and \ref{fig:6} show a perfect match between simulation and analytical results. This confirms the accuracy of our analysis of the time-variant CIR in Sections~\ref{sec:3}. Since particle-based simulation is costly, in the following subsections,  we adopt Monte-Carlo simulation by averaging our results  over  $10^5$ independent realizations of both the distance $r(t)$ and the CIR. The distance $r(t)$ is calculated from the locations of the transceivers, which are generated from Gaussian distributions, see Subsection~$\ref{sub2:3}$. In particular, $r(t)=\sqrt{\sum_{d\in\{x,y,z\}}(r_{d,\Rx}(t)-r_{d,\Tx}(t))^2}$, where $r_{d,\Tx}(t) \sim \mathcal{N}\left(r_{d,\Tx}(t=0), 2 D_\Tx t\right)$, $r_{d,\Rx}(t) \sim \mathcal{N}\left(r_{d,\Rx}(t=0), 2 D_\Rx t\right)$,   $r_{d,\Tx}(t=0)=0$, and $r_{d,\Rx}(t=0)=1/\sqrt{3}r_0$. The CIR is given by \eqref{eq:6} for each realization of $r(t)$.

	\subsection{Drug Delivery System Design}\label{sub7:2}
	
		In this section, we provide  numerical results  for the considered drug delivery system.  
	As mentioned above, the parameters in Table~\ref{tab:1} are chosen to match real system parameters, e.g., the diffusion coefficient $D_\mathrm{X}$ of drug molecules vary from $10^{-9}$ to $10^{-14}$ \si{\metre^2\per \second} \cite{WCX:14:PCL},  drug carriers have  sizes $\atx \geq \SI{100}{\nano\meter}$ \cite{PNJ:99:BPJ}, the size of tumor cells is on the order of $\si{\micro\metre}$, and  drug carriers can be injected or extravasated from the cardiovascular system into the tissue surrounding the targeted diseased cell site \cite{LEELYP:15:CES}, i.e., close to  the tumor cells. The dosing periods in drug delivery systems are on the order of days \cite{ARI:06:ADD}, i.e., $\SI{24}{\hour}$. For simplicity, we set  $N=5$ and  the value of the required absorption rate is set to $\theta(t)=\SI{1}{\second^{-1}}$.   We choose $I$   relatively large to obtain small intervals $\Tb$.

		\begin{figure*}[!tbp]
			\centering
			\begin{minipage}[t]{0.49\textwidth}\hspace*{-5 mm}
				\centering
				\resizebox{1.05\linewidth}{!}{
					\includegraphics[width=0.5\textwidth]{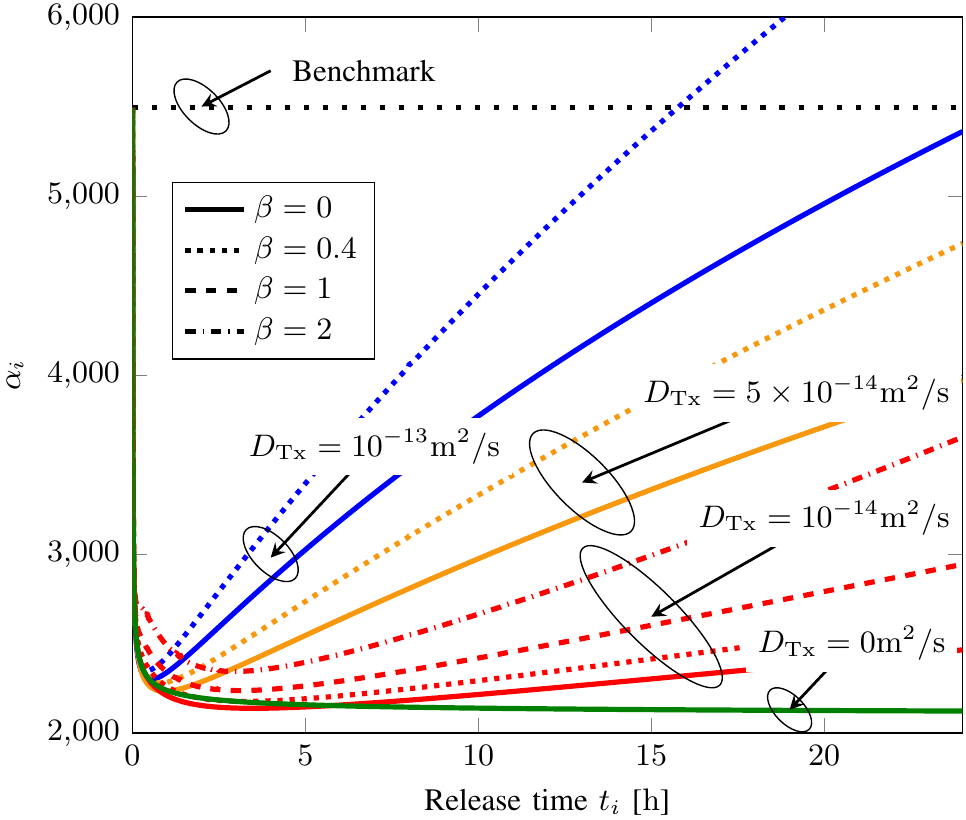}}\vspace*{-4 mm}
				\caption{
					Optimal number of released molecules $\alpha_i$ as a function of release time $t_i$ [\si{\hour}] for different system parameters and $T=\SI{24}{\hour}$. The black horizontal dotted line is the benchmark when the $\alpha_i$ are not optimized. 
				}
				\label{fig:7}
				
			\end{minipage}
			\hfill
			\vspace*{-1 mm}
			\begin{minipage}[t]{0.49\textwidth}\hspace*{-5 mm}
				\centering
				\resizebox{1.05\linewidth}{!}{
					\includegraphics[scale=0.55]{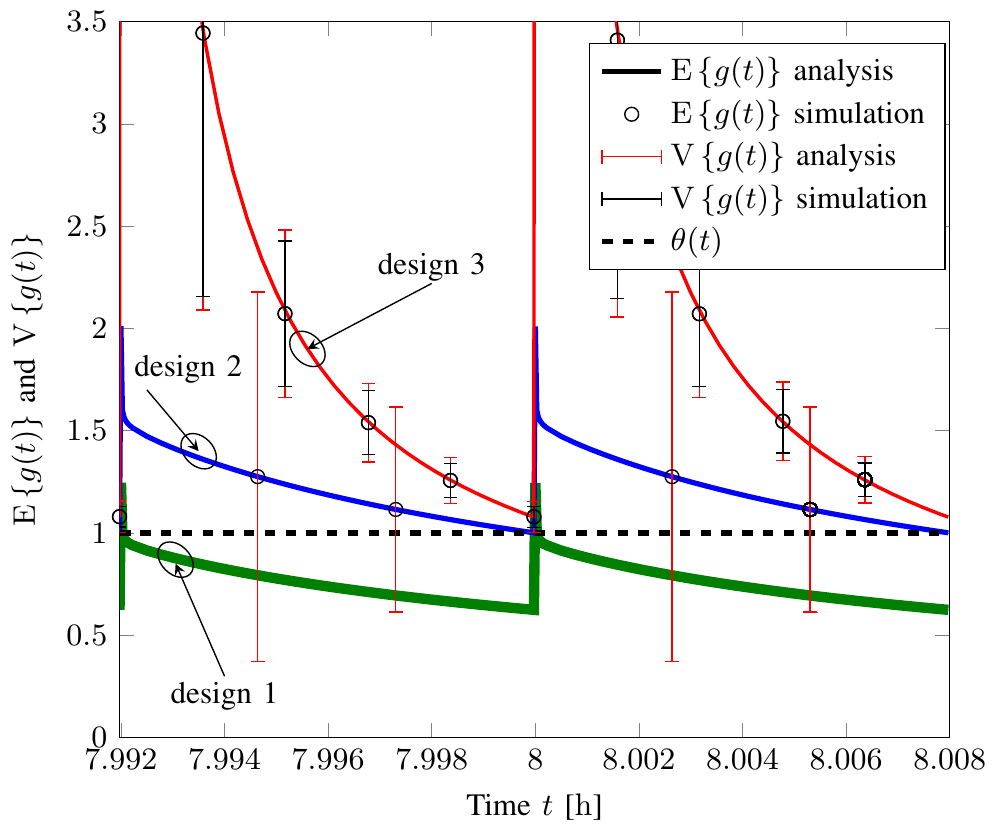}}\vspace*{-4 mm}
				\caption{
					$\mathrm{E}\left\{g(t)\right\}$ and $\mathrm{V}\left\{g(t)\right\}$ between the 1000-th release and the 1002-th release, i.e., at about \SI{8}{\hour}, for three different designs. Design 1 (green line): naive design without considering $\Tx$'s movement  with $D_\Tx=10^{-13}$ \si{\metre^2\per \second} and $\beta=0$; design 2 (blue line) and 3 (red line): optimal design for $\left(D_\Tx[\si{\metre^2\per \second}],\beta\right)=\left(10^{-13},0\right)$, and $\left(10^{-14},1\right)$, respectively.
				}
				\label{fig:8}
			\end{minipage}
		\end{figure*}

		In Fig.~\ref{fig:7}, we plot the  number of released molecules $\alpha_i$ versus the corresponding release time $t_i$ [$\si{\hour}$] for different system parameters. The  coefficients are obtained by solving the optimization problem in \eqref{eq:36} with $\beta=\{0,0.4,1,2\}$ for $D_\Tx=10^{-14}$ \si{\metre^2\per \second} and $\beta=\{0,0.4\}$ for $D_\Tx=\{5,10\}\times10^{-14}$ \si{\metre^2\per \second}. As mentioned  in the discussion of \eqref{eq:35}, we cannot  choose large values of $\beta$ when the diffusion coefficient is large, i.e., the standard deviation is large, as the problem in \eqref{eq:36} may become infeasible. Fig.~\ref{fig:7} shows that for all considered parameter settings, we  first have to   release a large number of molecules for the absorption rate to exceed the threshold. Then, in the static system with  $D_\Tx= \SI{0}{\metre^2\per \second}$, the optimal coefficient  decreases with increasing time, since a fraction of the molecules previously released   from  $\Tx$ linger around  $\Rx$ and are absorbed later. However, for the  time-variant channel,  $\Tx$ eventually diffuses away from  $\Rx$ as time  $t$ increases and hence,  molecules released at later times by  $\Tx$  will be far away from  $\Rx$ and may not reach it. Therefore, at later times, the  amount of drugs released has to be increased  for the absorption rate to not fall below the threshold. For larger $D_\Tx$,  $\Tx$ diffuses away from  $\Rx$ faster and thus, the number of released molecules $\alpha_i$ have to increase faster. This type of drug release, i.e., first releasing a large amount of drugs, then reducing and eventually increasing the amount of released drugs again, is called a tri-phasic release \cite{FWR:11:JP}. Once we have designed the release profile, we can implement it by choosing a suitable drug carrier as shown in \cite{FWR:11:JP}.  Moreover, as expected, for larger $\beta$, we need to release more drugs to ensure  that \eqref{eq:36} is feasible. The black horizontal dotted line in Fig.~\ref{fig:7} is a benchmark where the $\alpha_i, \forall i,$ are not optimized but naively set to $\alpha_i=\alpha_1=5493$. For this naive design, $A=\alpha_1 I\approx 1.65\times 10^7$, whereas with the optimal $\alpha_i$, for $\beta=0$ and $D_\Tx=10^{-13}$ \si{\metre^2\per \second}, $A=1.2 \times 10^7$, i.e.,  $27\%$ less than the  $A$ required for the naive design, and for $\beta=1$ and $D_\Tx=10^{-14}$ \si{\metre^2\per \second}, $A=7.6 \times 10^6$, i.e.,  $54\%$ less than the  $A$ required for the naive design. This highlights that applying the optimal release profile can save significant amounts of drugs and still satisfy the therapeutic requirements.  Moreover, as observed in Fig.~\ref{fig:7}, at later times, e.g., $t_i> \SI{15}{\hour}$ for $D_\Tx={10^{-13}}\si{\metre^2\per \second}$, the  values of $\alpha_i$ required to satisfy the desired absorption rate are higher than the fixed $\alpha_i$ used in the naive design, i.e., the benchmark, which means that the naive design  cannot provide the required absorption rate.
		\begin{figure*}[!tbp]
			\centering
			\begin{minipage}[t]{0.49\textwidth}\hspace*{-5 mm}
				\centering
				\resizebox{1.05\linewidth}{!}{
					\includegraphics[scale=0.55]{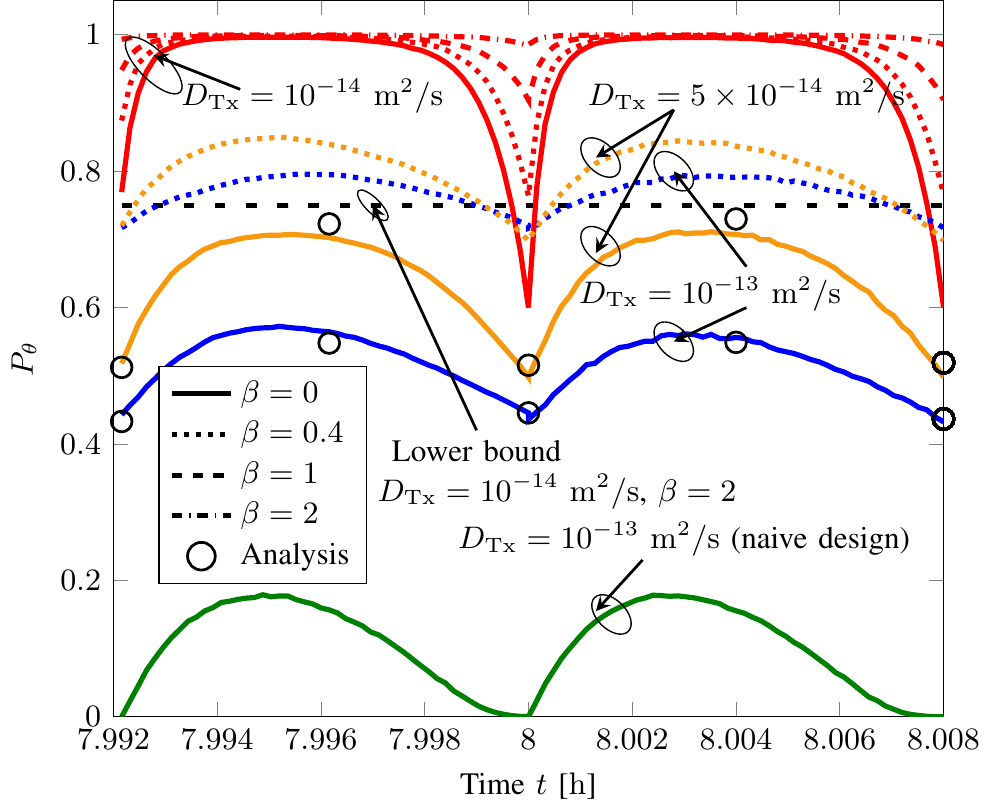}}\vspace*{-4 mm}
				\caption{
					$\Pt(t)$  as a function of time $t$ [$\si{\hour}$]  between the 1000-th release and the 1002-th release, i.e., at about $\SI{8}{\hour}$. 
				}
				\label{fig:9}
			\end{minipage}
			\hfill
			\vspace*{-1 mm}
			\begin{minipage}[t]{0.49\textwidth}\hspace*{-5 mm}
				\centering
				\resizebox{1.05\linewidth}{!}{
				\includegraphics[scale=0.55]{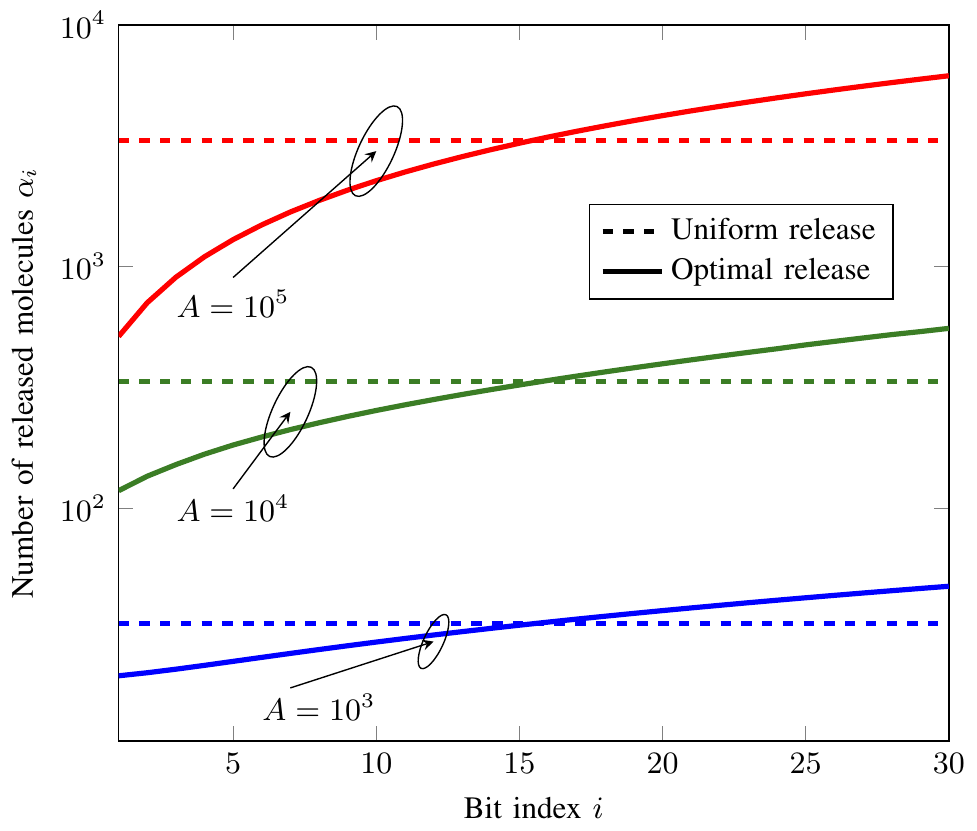}}\vspace*{-4 mm}
			\caption{
				The number of molecules available for release  for each bit for  uniform and optimal release   when $A=\left\{10^3,10^4,10^5\right\}$ and $T=\SI{300}{\second}$.
			}	
			\label{fig:10}	
			\end{minipage}
		\end{figure*}

		In Fig.~\ref{fig:8}, we plot the mean and standard deviation of the absorption rate,	$\mathrm{E}\left\{g(t)\right\}$ and $\mathrm{V}\left\{g(t)\right\}$, between the  $1000$-th release and the $1002$-th release for three designs. For designs 1 and 2, we assumed $D_\Tx=10^{-13}$ \si{\metre^2\per \second} and $\beta=0$, and for design 3, we adopted  $D_\Tx=10^{-14}$ \si{\metre^2\per \second} and $\beta=1$. Note that the considered time window, e.g., between the  $1000$-th release and the $1002$-th release, is chosen arbitrarily in the middle of $T$ to analyze the system behavior between individual releases. For design 1,  $\Tx$ diffuses with $D_\Tx=10^{-13}$ \si{\metre^2\per \second} but the  release profile is  designed without accounting for  $\Tx$'s mobility, i.e., the adopted $\alpha_i$ are given by the green line in Fig.~\ref{fig:7} obtained under the assumption of $D_\Tx= \SI{0}{\metre^2\per \second}$. For designs 2 and 3,  the mobility of  $\Tx$ is taken into account. The black dashed line marks the threshold $\theta(t)$ that $g(t)$ should not fall below. It is observed from Fig.~\ref{fig:8} that   when  $\Tx$ diffuses but the design does not take into account the mobility,	 the  requirement that the expected  absorption rate,  $\mathrm{E}\left\{g(t)\right\}$, exceeds $\theta(t)$, is not satisfied for most of the time. For design 2 with $\beta=0$, we observe that $\mathrm{E}\left\{g(t)\right\}>\theta(t)$ always holds but  $\mathrm{E}\left\{g(t)\right\}-\mathrm{V}\left\{g(t)\right\}>\theta(t)$ does not always hold. For design 3 with $\beta=1$, we  observe  that $\mathrm{E}\left\{g(t)\right\}-\mathrm{V}\left\{g(t)\right\}>\theta(t)$  always holds since $\beta>0$ enforces a gap between $\mathrm{E}\left\{g(t)\right\}$ and $\theta(t)$. In other words, even if $g(t)$ deviates from the mean, it can still exceed $\theta(t)$. 
	
		In Fig.~\ref{fig:9}, we present the system performance in terms of  the probability that  $g(t)\geq \theta(t)$, $\Pt(t)$,  for the time period between the $1000$-th and $1002$-th releases, i.e., at about $\SI{8}{\hour}$. The lines and markers denote  simulation and analytical results, respectively. Fig.~\ref{fig:9} shows a good agreement between  analytical and simulation results.  In Fig.~\ref{fig:9}, we observe that  $\Pt(t)$ increases with increasing $\beta$ because the design for larger $\beta$ enforces a larger gap between $\mathrm{E}\left\{g(t)\right\}$ and $\theta(t)$, as can be seen in Fig.~\ref{fig:8}. Moreover, for a given $\beta$, $\Pt(t)$ will be different for different $D_\Tx$. In particular, for larger $D_\Tx$, $\Pt(t)$ is smaller due to the faster diffusion and increasing randomness of the CIR. Moreover, in Fig.~\ref{fig:9}, the green line shows that  the naive design, i.e., design 1 in Fig.~\ref{fig:8}, has very poor  performance. In  Fig.~\ref{fig:9}, we also observe that between two releases, $\Pt(t)$ first increases due to the released drugs and then decreases due to drug diffusion.
		Furthermore, in Fig.~\ref{fig:9}, we also show the lower bound on $\Pt(t)$ derived in Proposition~\ref{prop:1} for $D_\Tx=10^{-14}$ \si{\metre^2\per \second} and $\beta=2$, where \eqref{eq:37} yields $\Pt(t)\geq 0.75$. Fig.~\ref{fig:9} shows that the red dash-dotted line, i.e., $\Pt(t)$ for $D_\Tx=10^{-14}$ \si{\metre^2\per \second} and $\beta=2$, is indeed above the horizontal black dashed line, i.e., $\Pt(t)=0.75$.

	\subsection{Molecular Communication System Design}

		In this subsection, we show  numerical results for the second application scenario, i.e., an MC system with imperfect CSI. We apply again the  system parameters in Table~\ref{tab:1} except that here we set $D_\Rx=10^{-11}~\si{\meter^2/\second}$,  $I=30$, $\eta=1$, $T=\SI{300}{\second}$, and $\Tb=T/I=\SI{10}{\second}$  to also allow  $\Rx$ to move  and to reduce the transmission window compared to the drug delivery system.
		
					\begin{figure*}[!tbp]
						\centering
						\begin{minipage}[t]{0.49\textwidth}\hspace*{-5 mm}
							\centering
							\resizebox{1.05\linewidth}{!}{
								\includegraphics[scale=0.55]{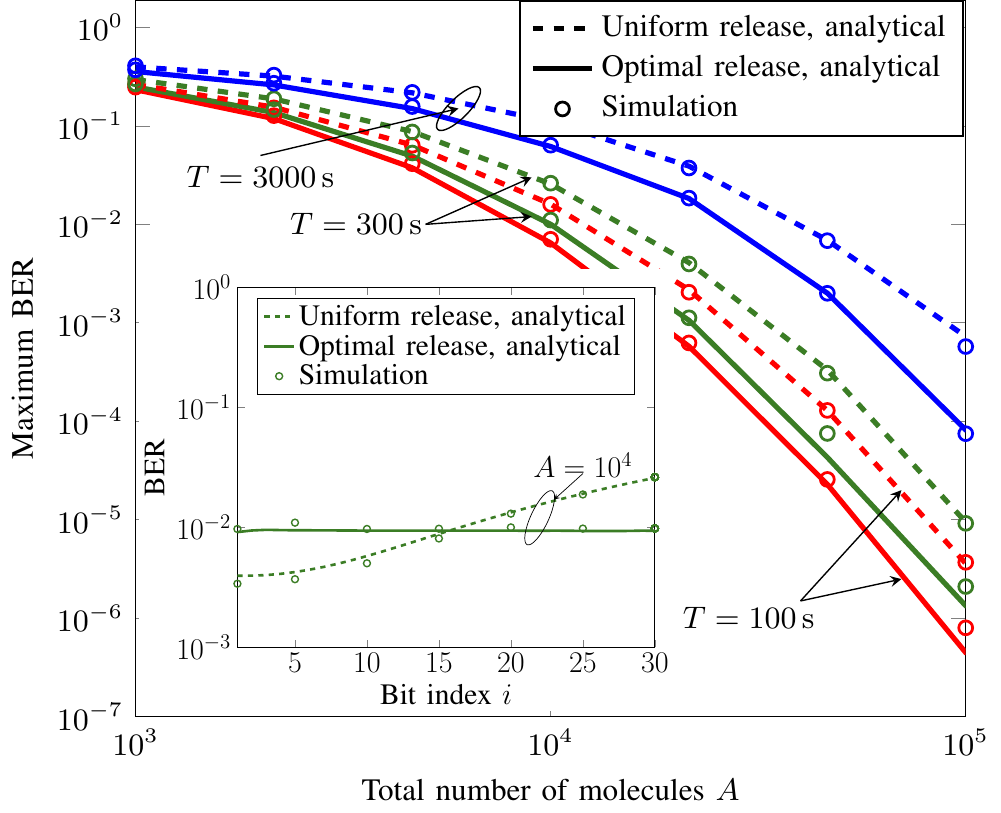}}\vspace*{-4 mm}
							\caption{
								Maximum  BER in a frame as a function of $A$ with uniform and optimal release. The inset shows the   BER for each bit in a frame for uniform  and optimal release for $A=10^4$ and $T=\SI{300}{\second}$. 	}	
							\label{fig:11}
						\end{minipage}
						\hfill
						\begin{minipage}[t]{0.1\textwidth}
						\end{minipage}
						\vspace*{-1 mm}
						\begin{minipage}[t]{0.49\textwidth}\hspace*{-5 mm}
							\centering
							\resizebox{1.05\linewidth}{!}{
								\includegraphics[scale=0.55]{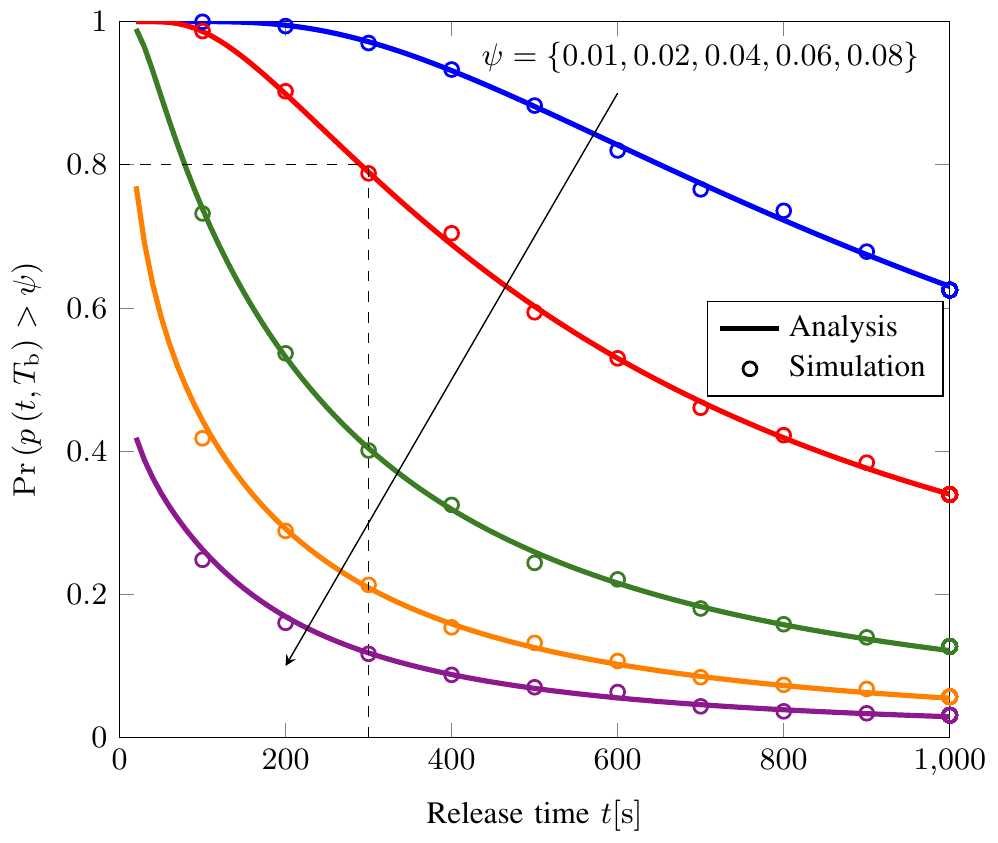}}\vspace*{-4 mm}
							\caption{
								The probability that  $p(t,\Tb)$  is larger than a given value $\psi$ as a function of $t$ for $\Tb=\SI{10}{\second}$. 	}	
							\label{fig:12}
						\end{minipage}
					\end{figure*}

		In Fig.~\ref{fig:10}, we consider the optimal release design, i.e., the optimal  number of  molecules available for transmission of  each bit in a frame, for an MC system with  fixed detection thresholds and fixed $A$, $A=\left\{10^3,10^4,10^5\right\}$.  The fixed detection thresholds $\xi$ are obtained from Subsection~\ref{sub5:2} by assuming uniform release.  Fig.~\ref{fig:10} reveals that in order to minimize the maximum  BER in a frame,  fewer molecules should be released at the beginning of the frame and  the number of released molecules  gradually increases with time. This is expected since, on average, for later release times,  more molecules are needed to compensate for the increasing distance  between the transceivers.

		Fig.~\ref{fig:11} shows the maximum  BER within a frame for uniform release and the proposed release design obtained from \eqref{eq:55a}, with a fixed detection threshold obtained from \eqref{eq:58},  as a function of $A$ for $T=\left\{100,300,3000\right\}\si{\second}$. As can be observed, the proposed optimal release profile leads to significant performance improvements   compared to uniform release, especially for large $A$.  For example, for $A=10^5$ and $T=\SI{3000}{\second}$, the maximum  BER is reduced by a factor of $8$ for optimal release compared to uniform release. On the other hand, to achieve a given desired BER, the total number of molecules $A$ required for optimal release is lower than that for uniform release.	In the inset of Fig.~\ref{fig:11}, we show the  BER as a function of bit index $i$ in one frame for uniform and optimal release for  $A=10^4$ and $T=\SI{300}{\second}$. We observe  that the optimal release achieves a lower maximum  BER compared to the uniform release.
		 We also observe that the optimal release leads to approximately the same  BER for each bit in a frame  which highlights the benefits of the proposed design. 
		
	Fig.~\ref{fig:12} shows the probability that $p(t, \Tb)$ is larger than a given value $\psi$, $\Pr\left(p(t, \Tb)>\psi\right)$, as a function of time $t$.  $\Pr\left(p(t, \Tb)>\psi\right)$ provides information about the probability that a released molecule is absorbed at the Rx, i.e.,  the efficiency of molecule usage. We observe from Fig.~\ref{fig:12} that $\Pr\left(p(t, \Tb)>\psi\right)$ is a decreasing function of time as expected from the analysis in Subsection~\ref{sub5:3}. Moreover, for a given $t$, $\Pr\left(p(t, \Tb)>\psi\right)$ is smaller for larger $\psi$. Furthermore, we can deduce the maximum time duration of a bit frame, $T^\star$, satisfying a required  molecule usage efficiency from Fig.~\ref{fig:12}. For example, for  $t\leq\SI{300}{\second}$,  $\Pr\left(p(t, \Tb)>0.02\right)\geq 0.8$ holds. Thus,  $T^\star=t+\Tb= \SI{310}{\second}$ guarantees a molecule usage efficiency of $\Pr\left(p(t, \Tb)>0.02\right)\geq 0.8$.  
	
	\section{Conclusions}\label{sec:7}
	In this paper, we considered a diffusive mobile MC system with an absorbing receiver, in which both the transceivers and the molecules diffuse. We provided a statistical analysis of the time-variant CIR and its integral, i.e., the probability that a  molecule is absorbed by $\Rx$ during  a given time period. We applied this statistical analysis to two system design problems, namely	drug delivery  and on-off keying based MC with imperfect CSI.  For the drug delivery system, we proposed an   optimal release profile which minimizes the number of released drug molecules while ensuring a target absorption rate for the drugs at the diseased site during a prescribed time period. The probability of satisfying the constraint on the absorption rate was adopted as a system performance criterion and  evaluated. We observed that ignoring the reality of the $\Tx$'s mobility for designing the release profile leads to unsatisfactory performance. For the MC system  with imperfect CSI, we optimized three design parameters, i.e., the detection threshold at  $\Rx$, the  release profile at  $\Tx$, and the time duration of a bit frame. Our simulation results revealed that the proposed MC system designs  achieved a better performance in terms of BER and molecule usage efficiency compared to a uniform-release system and a system without limitation on molecule usage, respectively. Overall, our results showed that taking into account the time-variance of the channel of mobile MC systems is crucial for achieving high performance.
	
	\appendices
\renewcommand{\thesectiondis}[2]{\Alph{section}:}
	\section{Proof of Lemma~\ref{lem:0}}\label{app:0}
	To prove \eqref{eq:5c}, we have
	\begin{align} \label{eq:64}
	\mathrm{E}\left\{r(t)\right\}&\overset{(a)}{=}\sqrt{2 D_2 t} \mathrm{E}\left\{\gamma\right\}\overset{(b)}{=}\sqrt{2 D_2 t}\sqrt{2}\mathrm{e}^{-\frac{\lambda^2}{2}}\sum_{n=0}^{\infty}\frac{\Gamma\left(n+2\right)}{n!\Gamma\left(n+3/2\right)}\left(\frac{\lambda^2}{2}\right)^n\\ \nonumber
	&=\sqrt{\frac{D_2 t}{\pi}}4 \e^{-\frac{\lambda^2}{2}}\sum_{n=0}^{\infty}\frac{\left(n+1\right)!}{\left(2n+1\right)!}\left(2\lambda^2\right)^n=\sqrt{\frac{D_2 t}{\pi}}2 \e^{-\frac{\lambda^2}{2}}\sum_{n=0}^{\infty}\left(\frac{\left(n\right)!}{\left(2n\right)!}+\frac{\left(n\right)!}{\left(2n+1\right)!}\right)\left(2\lambda^2\right)^n\\\nonumber
	&\overset{(c)}{=}\sqrt{\frac{D_2 t}{\pi}}2 \e^{-\frac{\lambda^2}{2}}\left(1+\frac{\sqrt{\pi}}{2}\sqrt{2\lambda^2}\e^{\frac{\lambda^2}{2}}\erf\left(\frac{\lambda}{\sqrt{2}}\right)+\frac{\sqrt{\pi}}{\sqrt{2\lambda^2}}\e^{\frac{\lambda^2}{2}}\erf\left(\frac{\lambda}{\sqrt{2}}\right)\right),
	\end{align}
	where $\Gamma(\cdot)$ denotes the Gamma function, equality $(a)$ is  due to  \eqref{eq:4}, equality $(b)$ is due to \cite[Eq.~(1.5)]{Par:61:QAM}, and equality $(c)$ is obtained by applying  \cite[Eq.~(5.2.11.6) and Eq.~(5.2.11.7)]{PBM:86:Book}. Simplifying the final expression, we obtain \eqref{eq:5c}.
	
	To prove \eqref{eq:5d}, we have
	\begin{align} \label{eq:65a}
	\mathrm{E}\left\{r^2(t)\right\}&\overset{(a)}{=}2 D_2 t \mathrm{E}\left\{\gamma^2\right\}\overset{(b)}{=}2 D_2 t 2 \e^{-\frac{\lambda^2}{2}}\sum_{n=0}^{\infty}\frac{\Gamma\left(n+5/2\right)}{n!\Gamma\left(n+3/2\right)}\left(\frac{\lambda^2}{2}\right)^n\\ \nonumber	
		&=4 D_2 t e^{-\frac{\lambda^2}{2}}\left( \frac{\lambda^2}{2}\sum_{n=1}^{\infty}\frac{1}{(n-1)!}\left(\frac{\lambda^2}{2}\right)^{n-1}+\frac{3}{2}\sum_{n=0}^{\infty}\frac{1}{n!}\left(\frac{\lambda^2}{2}\right)^n\right)\\ \nonumber
		&\overset{(c)}{=}4 D_2 t e^{-\frac{\lambda^2}{2}}\left(\frac{\lambda^2}{2} e^{\frac{\lambda^2}{2}}+\frac{3}{2}e^{\frac{\lambda^2}{2}}\right) =r_0^2+6 D_2 t,
	\end{align}
	where equality $(a)$ is obtained due to  \eqref{eq:4}, equality $(b)$ is due to \cite[Eq.~(1.5)]{Par:61:QAM}, and equality $(c)$ is obtained by applying  the Maclaurin series of the  exponential function.
	From \eqref{eq:64} and \eqref{eq:65a}, we obtain \eqref{eq:5d} since $\mathrm{Var}\left\{r(t)\right\}=\mathrm{E}\left\{r^2(t)\right\}-\mathrm{E}^2\left\{r(t)\right\}$.
	
	To prove \eqref{eq:5}, we have
					\begin{align} \label{eq:65}
					f_{r(t)}(r)\overset{(a)}{=}\frac{1}{\sqrt{2 D_2 t}}f_{\gamma}\left(\gamma\right)
					\overset{(b)}{=}\frac{r}{r_0  \sqrt{\pi D_2 t}}\exp\left(-\frac{r^2+r_0^2}{4 D_2 t}\right)\sinh\left(\frac{r_0 r}{2 D_2 t}\right),
					\end{align}
					where $f_{\gamma}\left(\gamma\right)$ is the PDF of  $\gamma$. Equality $(a)$ in \eqref{eq:65} exploits the fact that $\gamma$ is  a function of $r(t)$ \cite[Eq.~(5-16)]{PP:02:Book}. Equality $(b)$ in \eqref{eq:65} is obtained from the expression for PDF $f_{\gamma}\left(\gamma\right)$ \cite[Eq.~(1.6)]{MBB:58:QAM} and the relation $I_{1/2}(x)=\sqrt{\frac{2}{\pi x}}\sinh(x)$, where $I_{1/2}(x)$ is the Bessel function of the first kind and order $1/2$.
					
					Moreover, since $\frac{r(t)}{\sqrt{2D_\Tx t}}$ follows a noncentral chi distribution, we obtain \eqref{eq:29} as \cite[Eq.~(1)]{Rob:69:Bel}
					\begin{align} \label{eq:31}
					F_{r(t)}(r)=F_{\frac{r(t)}{\sqrt{2D_2 t}}}\left(\frac{r}{\sqrt{2D_2 t}}\right)=1-\mathbf{Q}_{\frac{3}{2}}\left(\lambda,\frac{r}{\sqrt{2D_2 t}}\right).
					\end{align}
	\section{Proof of Theorem~\ref{theo:3}}\label{app:1}
	
	For this proof, we keep in mind that $\hat{h}\left(r,\tau\right)$ and $h\left(t,\tau\right)$ are two functions of different  variables but give the same value $h$ since $r$ is a function of $t$. Taking the derivative of \eqref{eq:6} with respect to $r$, we obtain \eqref{eq:27}. From \eqref{eq:27}, we observe that $\hat{h}'(r,\tau)=0$ is equivalent to  a cubic equation in $r$, given by  $a r^3+b r^2+ c r +d=0$, with properly defined coefficients $a$, $b$, $c$, $d$ and  discriminant $\Delta=18abcd-4b^3 d+b^2 c^2-4a c^3-27a^2 d^2$. From \eqref{eq:27}, we have $\Delta<0$ and thus $\hat{h}'(r,\tau)=0$ has only one real valued solution, denoted by $r^\star$, which corresponds to the maximum value of $\hat{h}(r,\tau)$, denoted by $h^\star$.  Then, from \eqref{eq:27}, we observe that $\hat{h}'(r,\tau)>0$ for $r<r^\star$ and $\hat{h}'(r,\tau)<0$ for $r>r^\star$. Therefore, the equation $\hat{h}\left(r,\tau\right)=h$ has two solutions $r_1(h)$ and $r_2(h)$, $r_1(h)< r_2(h)$, when $h<h^\star$, and has only one solution $r^\star$ when $h=h^\star=\hat{h}\left(r^\star,\tau\right)$. Finally, we derive \eqref{eq:25} by exploiting \cite[Eq.~(5-16)]{PP:02:Book} for the PDF of  functions of  random variables.  Moreover, for $ h=h^\star$, $\hat{h}'(r,\tau)=0$ so $f_{h\left(t,\tau\right)}(h)=\frac{f_{r(t)}(r^\star)}{\hat{h}'(r^\star,\tau)}\rightarrow \infty$.
	
		\section{Proof of Lemma~\ref{lem:2}}\label{app:3}
		To prove Lemma~\ref{lem:2}, we need to prove that $ \erf\left(\zeta_i(\xi,\alpha_i)\right)-\erf\left(\frac{\xi-\eta}{\sqrt{2\eta}}\right)$ is convex in $\xi$.  $\erf(\cdot)$ is a convex and non-decreasing function  for negative arguments and  a concave and non-decreasing function  for positive arguments. For $\eta<\xi<\mu_{i,1}$, we have $\zeta_i(\xi,\alpha_i)<0$ and $\frac{\xi-\eta}{\sqrt{2\eta}}>0$. Moreover, $\zeta_i(\xi,\alpha_i)$ and $\frac{\xi-\eta}{\sqrt{2\eta}}$ are affine functions of $\xi$. Then, $ \erf\left(\zeta_i(\xi,\alpha_i)\right)$ is  a convex and non-decreasing function of an affine function and thus is convex in $\xi$ \cite[Eq.~(3.10)]{BV:04:Book}. $\erf\left(\frac{\xi-\eta}{\sqrt{2\eta}}\right)$ is a concave and non-decreasing function of an affine function  and thus is concave in $\xi$ \cite[Eq.~(3.10)]{BV:04:Book}. Therefore, $ \erf\left(\zeta_i(\xi,\alpha_i)\right)-\erf\left(\frac{\xi-\eta}{\sqrt{2\eta}}\right)$ is convex, which concludes the proof.
		
	\section{Proof of Lemma~\ref{lem:1}}\label{app:2}
	To prove Lemma~\ref{lem:1}, we need to prove that $ \erf\left(\zeta_i(\xi,\alpha_i)\right)$ is convex in $\bm{\alpha}$. First, $\erf(\cdot)$ is a convex and non-decreasing function for negative arguments and $\zeta_i(\xi,\alpha_i)<0$ for $\xi<\mu_{i,1}$. Second, when $\eta<\xi$, $\zeta_i(\xi,\alpha_i)$ is a convex function in $\bm{\alpha}$ since its Hessian matrix  is positive semi-definite. Then, $ \erf\left(\zeta_i(\xi,\alpha_i)\right)$ is a convex and non-decreasing function of a convex function, and thus, it is convex in $\bm{\alpha}$ for $\eta<\xi<\mu_{i,1}$ \cite[Eq.~(3.10)]{BV:04:Book}, which concludes the proof.

	\section{Proof of Lemma~\ref{lem:3}} \label{app:4}
	To prove Lemma~\ref{lem:3}, we need to prove $\frac{\partial F_{r(t)}(r)}{\partial t}<0$. Using the Taylor series expansion of the $\sinh(\cdot)$  function in \eqref{eq:5}, we obtain
	\begin{align}
	\frac{\partial F_{r(t)}(r)}{\partial t}=&\frac{\partial}{\partial t}\left(\int_0^r\frac{x}{r_0  \sqrt{\pi D_2 t}}\exp\left(-\frac{x^2+r_0^2}{4 D_2 t}\right)\sum_{n=0}^\infty \frac{1}{(2n+1)!}\left(\frac{r_0 x}{2 D_2 t}\right)^{2n+1} \diff x\right)\\\nonumber
	=&\int_0^r\frac{x}{r_0  \sqrt{\pi D_2 }}
	\sum_{n=0}^\infty  \left[\frac{1}{(2n+1)!}\left(\frac{r_0 x}{2 D_2 }\right)^{2n+1}\frac{\partial}{\partial t}\left(\exp\left(-\frac{x^2+r_0^2}{4 D_2 t}\right)t^{-2n-3/2}\right)\right]\diff x\\\nonumber
	=&\int_0^r\frac{x}{r_0  \sqrt{\pi D_2 }}
	\sum_{n=0}^\infty  \left[\frac{1}{(2n+1)!}\left(\frac{r_0 x}{2 D_2 }\right)^{2n+1}\exp\left(-\frac{x^2+r_0^2}{4 D_2 t}\right)t^{-2n-5/2}\right.\\\nonumber
	&\left.\times\left(-2n-3/2+\frac{x^2+r_0^2}{4 D_2 t}\right)\right]\diff x \leq 0.
	\end{align}
	\bibliographystyle{IEEEtran}
	\bibliography{IEEEabrv,MolecularBib}

\end{document}